\newcommand{\bra}[1]{\langle #1|}
\newcommand{\cket}[1]{|#1\rangle}
\newcommand{\bracket}[2]{\langle #1|#2\rangle}
\newcommand{\mcl}[1]{\mathcal{#1}}
\newcommand{\norme}[2]{\left|\left|#1\right|\right|_{#2}}
\newcommand{\Acal}{{\mathcal A}}
\newcommand{\Bcal}{{\mathcal B}}
\newcommand{\EcalKDC}{{\mathcal E}_{\mathrm{KD+}}}
\newcommand{\EcalKDCpu}{{\mathcal E}_{\mathrm{KD+}}^{\mathrm{pure}}}
\newcommand{\EcalKDCext}{{\mathcal E}_{\mathrm{KD+}}^{\mathrm{ext}}}
\newcommand{\convAB}{\conv{\Acal \cup \Bcal}}
\newcommand{\Hcal}{{\mathcal H}}
\newcommand{\Kext}{{K_{\mathrm{ext}}}}
\newcommand{\Ncal}{{\mathcal N}}
\newcommand{\PiAcal}{{\Pi_{\Acal}}}
\newcommand{\PiBcal}{{\Pi_{\Bcal}}}
\newcommand{\N}{\mathbb{N}}
\newcommand{\R}{\mathbb{R}}
\newcommand{\scr}{s^{\textsc{cr}}}
\newcommand{\Ncr}{\mcl{N}^{\textsc{cr}}}
\newcommand{\nab}{n_{\Acal,\Bcal}}
\newcommand{\nabcr}{n_{\Acal, \Bcal}^{\textsc{cr}}}
\newcommand{\na}{n_{\Acal}}
\newcommand{\nb}{n_{\Bcal}}
\newcommand{\mab}{m_{\Acal,\Bcal}}
\newcommand{\Tr}{\mathrm{Tr}\,}
\newcommand{\conv}[1]{\mathrm{conv}\left(#1\right)}
\newcommand{\id}[1]{\ensuremath{{I}_{#1}}}
\renewcommand{\H}{\mathcal{H}}
\newcommand{\IntEnt}[2]{\llbracket #1 , #2 \rrbracket}
\renewcommand{\epsilon}{\varepsilon}
\newcommand{\Minnabcr}{\mcl{M}_{\nabcr}}
	\renewcommand{\thesection}{\arabic{section}}
	\renewcommand{\thesubsection}{\arabic{section}.\arabic{subsection}}
\newtheorem{Theorem}{Theorem}[section]
\newtheorem{defini}[Theorem]{Definition}
\newtheorem{Lemma}[Theorem]{Lemma}
\newtheorem{Prop}[Theorem]{Proposition}
\title{Convex roofs witnessing Kirkwood-Dirac nonpositivity }
\author{Christopher Langrenez$^1$\thanks{christopher.langrenez@univ-lille.fr}, David R.M. Arvidsson-Shukur$^2$\thanks{drma2@cam.ac.uk}, Stephan De Bi\`evre$^1$\thanks{stephan.de-bievre@univ-lille.fr}\\
$\,^1$Univ. Lille, CNRS, Inria, UMR 8524, Laboratoire Paul Painlev\'e, F-59000 Lille, France\\
$\,^2$ Hitachi Cambridge Lab., J.J Thomson Avenue, Cambridge CB3 0HE, UK 
}
\begin{document}

\maketitle

\begin{abstract}

Given two observables $A$ and $B$, one can associate to every quantum state a Kirkwood-Dirac (KD) quasiprobability distribution. KD distributions are like joint classical probabilities except that they can have negative or nonreal values, which are associated to nonclassical features of the state. In the last decade, KD distributions have come to the forefront as a versatile tool  to investigate and construct  quantum advantages and nonclassical phenomena. KD distributions are also used to determine quantum-classical boundaries. To do so, one must have witnesses for when a state is KD nonpositive. Previous works have established a relation between the 
uncertainty of a pure state with respect to the eigenbases of $A$ and $B$ and  KD positivity. If this \textit{support uncertainty} is large, the state cannot be KD positive. Here, we construct two witnesses for KD nonpositivity for general mixed states. Our first witness is the convex roof of the support uncertainty; it is not faithful, but it extends to the convex hull of pure KD-positive states the relation  between KD positivity and small support uncertainty. Our other witness is the convex roof of the total KD nonpositivity, which provides a faithful witness for the convex hull of the pure KD-positive states. This implies that the convex roof of the total nonpositivity captures the nonpositive nature of the KD distribution at the underlying pure state level.
\end{abstract}

\tableofcontents
\section{Introduction}

Quasiprobability distributions play an important role in the development of quantum mechanics and in quantum information processing. They may be used to distinguish states which can manifest strong quantum properties from those that cannot. The most commonly encountered  quasiprobability distributions are the Wigner distribution and the Glauber-Sudarshan P-function. These distributions  are defined for systems characterized by two conjugate variables $X$ and $P$. Wigner-positive states are states with an everywhere nonnegative Wigner function. Such states  are known not to provide a quantum advantage in quantum computation~\cite{MariEisert2012}. In quantum optics, states with a positive P-function are often referred to as ``classical'' since they mimic in many ways the behaviour of classical light~\cite{titulaer_correlation_1965}. We shall find it convenient to call such states P positive. The P-positive states form a subset of the Wigner-positive states. In both cases, much effort has gone into finding witnesses, measures, and monotones of negativity; see~~\cite{hillery_classical_1985,bach_simplex_1986,hillery_nonclassical_1987,hillery_total_1989,lee_measure_1991,agarwal_nonclassical_1992,lee_theorem_1995,lutkenhaus_nonclassical_1995,dodonov_hilbertschmidt_2000,marian_quantifying_2002,richter_nonclassicality_2002,kenfack_negativity_2004,asboth_computable_2005,ryl_unified_2015,sperling_convex_2015,killoran_converting_2016, yadin_general_2016, nair_nonclassical_2017, ryl_quantifying_2017, alexanian_non-classicality_2018,  yadin_operational_2018, kwon_nonclassicality_2019,de_bievre_measuring_2019,luo_quantifying_2019, Hoetal19, bohmann_probing_2020,  hertz_quadrature_2020, hertz_relating_2020, Arnhem2022, hertzdebievre2023}.
Equivalently, this amounts to obtaining a convenient and precise description of all Wigner-positive or P-positive states.  The question is whether there are  witnesses that can determine if a state $\rho$ is Wigner positive and/or P positive? For pure states, the answer is well known. By Hudson's theorem, the only pure Wigner-positive states are the pure Gaussian states~\cite{hudson1974}. Those are precisely the pure states that minimize the Robertson-Schr\"odinger uncertainty relation: $\det\gamma\geq 1/4$ (See~\cite{serafini2017}). Here $\gamma$ is the covariance matrix of the state:
\begin{equation}\label{eq:covmatrix}
\gamma=\begin{pmatrix}
    \sigma_x^2&\sigma_{xp}^2 \\ \sigma_{xp}^2 & \sigma_p^2
\end{pmatrix},
\end{equation}
with $\sigma_x^2$ and $\sigma_p^2$ being the variances of $X$ and  $P$, and 
\begin{equation}
    \sigma_{xp}^2=\frac12\langle XP+PX\rangle- \langle X\rangle\langle P\rangle.
\end{equation}
It follows from the definition of the P function that the only pure P-positive states are the coherent states, which minimize the additive uncertainty relation $\sigma_x^2+\sigma_p^2\geq 1$. For mixed states, the question is considerably more complex. 

In this paper, we address analogous questions for the Kirkwood-Dirac (KD) distribution, the definition of which we first recall.
The definitions of the Wigner function and of the P function are intimately related to the existence of conjugate variables for the systems considered. The KD distributions, on the other hand, form  a versatile family of quasiprobability distributions well adapted to discrete variable systems. They  are defined as follows~\cite{kirkwood1933, dirac1945, arvidssonshukur2024properties}. Given two orthonormal bases  $ \left(\cket{a_i}\right)_{i\in\IntEnt{1}{d}}$ and $\left(\cket{b_j}\right)_{j\in\IntEnt{1}{d}}$ in a complex Hilbert space $\H$ of dimension $d$, the KD-distribution $Q(\rho)$ of a state $\rho$ is given by
\begin{equation}\label{eq:KDGV1}
\forall (i,j)\in \IntEnt{1}{d}^2, Q_{ij}(\rho) = \bracket{b_j}{a_i}\bra{a_i}\rho\cket{b_j}.
\end{equation}
One can, but need not, think of the two bases as being eigenbases of two distinct, typically incompatible, observables $A$ and $B$. 
Each KD distribution is a quasiprobability distribution since its values can be nonreal or real and negative. Nevertheless,  it satisfies the following properties: 
\begin{equation}\label{eq:Qmarginals}
\sum_{j=1}^{d} Q_{ij}(\rho)=\langle a_i|\rho|a_i\rangle\in\R, \quad \sum_{i=1}^{d} Q_{ij}(\rho)=\langle b_j|\rho|b_j\rangle\in\R, \quad \sum_{i,j}Q_{ij}(\rho)=\Tr (\rho)=1.
\end{equation}
Thus, the KD distributions' marginals correspond to the probabilities of the Born rule when a measurement is made in one or the other of the two bases. In this sense, a KD distribution is akin to a joint probability distribution for the observables $A$ and $B$. If $\mab := \min_{i,j} |\bracket{a_i}{b_j}| >0$, then the KD distribution is informationally complete: knowledge of $Q(\rho)$ uniquely determines $\rho$ \cite{arvidssonshukur2024properties}. 

It has been shown in various settings such as direct state tomography, quantum metrology,  weak measurements, quantum thermodynamics, quantum scrambling, Leggett-Garg inequalities, and generalised contextuality that nonclassical behaviour of a state $\rho$ is associated with the presence of negative or nonreal values in its KD-distribution. (See \cite{lostaglio2023kirkwood,arvidssonshukur2024properties} for reviews.) This makes it important to obtain a good understanding of what distinguishes such states from the  states for which $Q(\rho)$ is a classical joint probability distribution, such that $Q_{ij}(\rho)\geq 0$ for all $i,j\in\IntEnt{1}{d}$. We shall refer to such states as KD-positive states, following~\cite{langrenez2023characterizing}. Note that in much of the literature the term ``KD-classical state'' is used instead. But since the notions of a classical and nonclassical state can have a variety of meanings, we  use the more specific and neutral term ``KD-positive state''. We denote the convex set of all KD-positive states by $\EcalKDC$. In view of what precedes, it is of interest to identify witnesses and measures of KD nonpositivity. This is the objective of this article.

Writing $\EcalKDCext$ for the extreme points of $\EcalKDC$ and $\EcalKDCpu$ for the pure KD positive states, one has that
\begin{equation}
    \Acal\cup\Bcal\subseteq \EcalKDCpu\subseteq \EcalKDCext,
\end{equation}
where
\begin{equation}\label{eq:acalbcal}
    \Acal=\{ \cket{a_i}\bra{a_i}\mid i\in\IntEnt{1}{d}\} \ \mathrm{and} \   \Bcal=\{ \cket{b_j}\bra{b_j}\mid j\in\IntEnt{1}{d}\}
\end{equation}
are eigenbasis projectors of $A$ and $B$. As outlined in Fig.~\ref{fig:inclusions}, it follows that
\begin{equation}\label{eq:inclusions}
\conv{\Acal\cup\Bcal}\subseteq \conv{\EcalKDCpu}\subseteq \EcalKDC,
\end{equation}
where ``conv'' abbreviates ``convex hull of''. The first inclusion is readily checked from the definition of the KD distribution. 

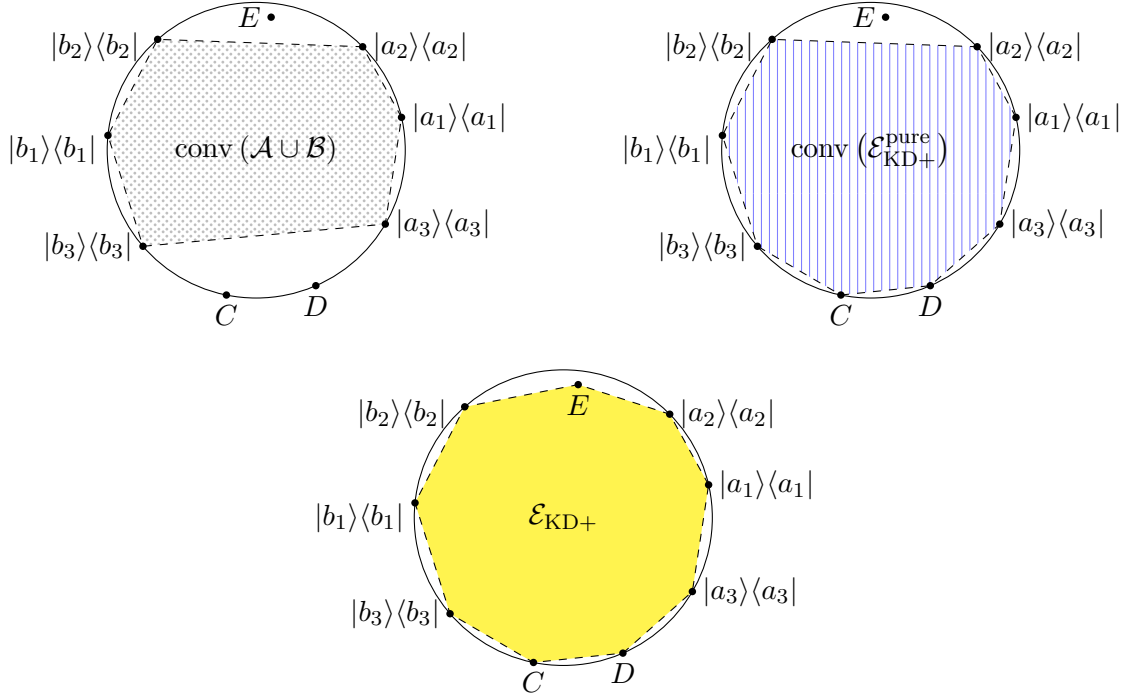
\begin{figure}
\begin{center}
\begin{tikzpicture}[scale=0.98]
	\draw (0,0) circle (2cm);
	\fill[pattern = crosshatch dots, pattern color=gray!50!white](1.95,{sqrt(4-1.95*1.95)}) -- ({sqrt(4-1.4*1.4)},1.4) --({-sqrt(4-1.5*1.5)},1.5)-- ({-sqrt(4-0.2*0.2)},0.2)-- (-{sqrt(4-1.3*1.3)},-1.3)--({sqrt(4-1*1)},-1)--(1.95,{sqrt(4-1.95*1.95)});
	\draw[color=black, dashed](1.95,{sqrt(4-1.95*1.95)}) -- ({sqrt(4-1.4*1.4)},1.4) --({-sqrt(4-1.5*1.5)},1.5)-- ({-sqrt(4-0.2*0.2)},0.2)-- (-{sqrt(4-1.3*1.3)},-1.3)--({sqrt(4-1*1)},-1)--(1.95,{sqrt(4-1.95*1.95)});
	\node at (0,0) {$\convAB$};
        \fill (1.95,{sqrt(4-1.95*1.95)}) circle (0.05); 
	\node[right] at (1.95,{sqrt(4-1.95*1.95)}){$\cket{a_1}\bra{a_1}$};
	\fill ({sqrt(4-1.4*1.4)},1.4) circle (0.05); 
	\node[right] at  ({sqrt(4-1.4*1.4)},1.4){$\cket{a_2}\bra{a_2}$};
	\fill ({sqrt(4-1*1)},-1) circle (0.05); 
	\node[right] at ({sqrt(4-1*1)},-1){$\cket{a_3}\bra{a_3}$};
	\fill ({-sqrt(4-0.2*0.2)},0.2) circle (0.05); 
	\node[left] at (-2,0){$\cket{b_1}\bra{b_1}$};
	\fill ({-sqrt(4-1.5*1.5)},1.5) circle (0.05); 
	\node[left] at ({-sqrt(4-1.4*1.4)},1.4){$\cket{b_2}\bra{b_2}$};
	\fill (-{sqrt(4-1.3*1.3)},-1.3) circle (0.05); 
	\node[left] at (-{sqrt(4-1.3*1.3)},-1.3){$\cket{b_3}\bra{b_3}$};
	\fill (0.2,1.8) circle (0.05); 
	\node[left] at  (0.2,1.8){$E$};
	\fill (-0.4,{-sqrt(4-0.4*0.4)}) circle (0.05); 
	\node[below] at  (-0.4,{-sqrt(4-0.4*0.4)}) {$C$};
 	\fill (0.8,{-sqrt(4-0.8*0.8)}) circle (0.05); 
	\node[below] at  (0.8,{-sqrt(4-0.8*0.8)}) {$D$};
\end{tikzpicture}
\hspace{1cm}
\begin{tikzpicture}[scale=0.98]
	\draw (0,0) circle (2cm);
	\fill[pattern = vertical lines, pattern color=blue!50!white] (1.95,{sqrt(4-1.95*1.95)}) -- ({sqrt(4-1.4*1.4)},1.4)--({-sqrt(4-1.5*1.5)},1.5)-- ({-sqrt(4-0.2*0.2)},0.2)-- (-{sqrt(4-1.3*1.3)},-1.3)--(-0.4,{-sqrt(4-0.4*0.4)})-- (0.8,{-sqrt(4-0.8*0.8)})--({sqrt(4-1*1)},-1)--(1.95,{sqrt(4-1.95*1.95)});
	\draw[color=black, dashed] (1.95,{sqrt(4-1.95*1.95)}) -- ({sqrt(4-1.4*1.4)},1.4)--({-sqrt(4-1.5*1.5)},1.5)-- ({-sqrt(4-0.2*0.2)},0.2)-- (-{sqrt(4-1.3*1.3)},-1.3)--(-0.4,{-sqrt(4-0.4*0.4)})-- (0.8,{-sqrt(4-0.8*0.8)})--({sqrt(4-1*1)},-1)--(1.95,{sqrt(4-1.95*1.95)});
	\node at (0,0) {$\conv{\EcalKDCpu}$};
\fill (1.95,{sqrt(4-1.95*1.95)}) circle (0.05); 
	\node[right] at (1.95,{sqrt(4-1.95*1.95)}){$\cket{a_1}\bra{a_1}$};
	\fill ({sqrt(4-1.4*1.4)},1.4) circle (0.05); 
	\node[right] at  ({sqrt(4-1.4*1.4)},1.4){$\cket{a_2}\bra{a_2}$};
	\fill ({sqrt(4-1*1)},-1) circle (0.05); 
	\node[right] at ({sqrt(4-1*1)},-1){$\cket{a_3}\bra{a_3}$};
	\fill ({-sqrt(4-0.2*0.2)},0.2) circle (0.05); 
	\node[left] at (-2,0){$\cket{b_1}\bra{b_1}$};
	\fill ({-sqrt(4-1.5*1.5)},1.5) circle (0.05); 
	\node[left] at ({-sqrt(4-1.4*1.4)},1.4){$\cket{b_2}\bra{b_2}$};
	\fill (-{sqrt(4-1.3*1.3)},-1.3) circle (0.05); 
	\node[left] at (-{sqrt(4-1.3*1.3)},-1.3){$\cket{b_3}\bra{b_3}$};
	\fill (0.2,1.8) circle (0.05); 
	\node[left] at  (0.2,1.8){$E$};
	\fill (-0.4,{-sqrt(4-0.4*0.4)}) circle (0.05); 
	\node[below] at  (-0.4,{-sqrt(4-0.4*0.4)}) {$C$};
 	\fill (0.8,{-sqrt(4-0.8*0.8)}) circle (0.05); 
	\node[below] at  (0.8,{-sqrt(4-0.8*0.8)}) {$D$};
\end{tikzpicture}
\end{center}
\begin{center}
\begin{tikzpicture}[scale=0.98]
	\draw (0,0) circle (2cm);
	\fill[color=yellow!80!white] (1.95,{sqrt(4-1.95*1.95)}) -- ({sqrt(4-1.4*1.4)},1.4) --(0.2,1.8)--({-sqrt(4-1.5*1.5)},1.5)-- ({-sqrt(4-0.2*0.2)},0.2)-- (-{sqrt(4-1.3*1.3)},-1.3)--(-0.4,{-sqrt(4-0.4*0.4)})-- (0.8,{-sqrt(4-0.8*0.8)})--({sqrt(4-1*1)},-1)--(1.95,{sqrt(4-1.95*1.95)});
	\draw[color=black,dashed](1.95,{sqrt(4-1.95*1.95)}) -- ({sqrt(4-1.4*1.4)},1.4) --(0.2,1.8)--({-sqrt(4-1.5*1.5)},1.5)-- ({-sqrt(4-0.2*0.2)},0.2)-- (-{sqrt(4-1.3*1.3)},-1.3)--(-0.4,{-sqrt(4-0.4*0.4)})-- (0.8,{-sqrt(4-0.8*0.8)})--({sqrt(4-1*1)},-1)--(1.95,{sqrt(4-1.95*1.95)});
	\node at (0,0) {$\EcalKDC$};
        \fill (1.95,{sqrt(4-1.95*1.95)}) circle (0.05); 
	\node[right] at (1.95,{sqrt(4-1.95*1.95)}){$\cket{a_1}\bra{a_1}$};
	\fill ({sqrt(4-1.4*1.4)},1.4) circle (0.05); 
	\node[right] at  ({sqrt(4-1.4*1.4)},1.4){$\cket{a_2}\bra{a_2}$};
	\fill ({sqrt(4-1*1)},-1) circle (0.05); 
	\node[right] at ({sqrt(4-1*1)},-1){$\cket{a_3}\bra{a_3}$};
	\fill ({-sqrt(4-0.2*0.2)},0.2) circle (0.05); 
	\node[left] at (-2,0){$\cket{b_1}\bra{b_1}$};
	\fill ({-sqrt(4-1.5*1.5)},1.5) circle (0.05); 
	\node[left] at ({-sqrt(4-1.4*1.4)},1.4){$\cket{b_2}\bra{b_2}$};
	\fill (-{sqrt(4-1.3*1.3)},-1.3) circle (0.05); 
	\node[left] at (-{sqrt(4-1.3*1.3)},-1.3){$\cket{b_3}\bra{b_3}$};
	\fill (0.2,1.8) circle (0.05); 
	\node[below] at  (0.2,1.8){$E$};
	\fill (-0.4,{-sqrt(4-0.4*0.4)}) circle (0.05); 
	\node[below] at  (-0.4,{-sqrt(4-0.4*0.4)}) {$C$};
 	\fill (0.8,{-sqrt(4-0.8*0.8)}) circle (0.05); 
	\node[below] at  (0.8,{-sqrt(4-0.8*0.8)}) {$D$};
\end{tikzpicture}
    \caption{These three figures provide a schematic representation of the inclusions in Eq.~\eqref{eq:inclusions}, when $d=3$. The disc represents all states, with the circle forming its boundary representing the pure states. The points $C$ and $D$ represent pure states that are extreme states of $\EcalKDC$, but that are not basis states. If at least one such state exists, $\conv{\Acal\cup\Bcal}\subsetneq\conv{\EcalKDCpu}$. The point $E$ represents a mixed extreme  state of $\EcalKDC$. If such a state exists, $\conv{\EcalKDCpu}\subsetneq\EcalKDC$.}
    \label{fig:inclusions}
\end{center}
\end{figure}

As pointed out above, the pure P-positive and Wigner-positive states 
%\Dave{[Is this true, do we know all pure Wigner positive states in all dimensions?]} \Stephan{[I think so. These are the stabilizer states, it seems to me. Christopher, can you check this out in the literature?]} \Dave{[speaking with Ryuji Takagi at this conference in Tokyo, he thinks it is not fully known in even dimension]}
are readily identified. 
%\Stephan{Actually, throughout this introduction, I am thinking of continuous variable Wigner functions. This is quite clear from the link made with the P function. I mentioned the Hudson theorem above. I would just leave things as they are. } \chris{There is "no Wigner function" in even dimensions I would say. And for odd dimensions, this is Gross paper in 2006. The abstract begins with : "We show that, on a Hilbert space of odd dimension, the only pure states to possess a non-negative Wigner function are stabilizer states.".}
The situation for KD distributions is considerably more complex. This is because the identification of the pure KD-positive states depends strongly on the two bases $\Acal$ and $\Bcal$ used to define it, as illustrated in \cite{arvidsson-shukuretal2021, debievre2021, debievre2023a, langrenez2023characterizing, XU2024, Xu22, langrenezetal2024}.  Nevertheless, in~\cite{arvidsson-shukuretal2021, debievre2021, debievre2023a}, general sufficient conditions were formulated for the nonpositivity of the KD distributions of pure states using a suitably adapted notion of ``uncertainty'',  in analogy with the case of the Wigner and P functions. 
One can phrase the indeterminacy  of  a pure state $\cket{\psi}$ with respect to measurements in the two bases $\Acal$ and $  \Bcal$ in terms of a quantity called the support uncertainty $\nab$ [see Eq.~\eqref{eq:nab}]. It was then shown in~\cite{arvidsson-shukuretal2021, debievre2021, debievre2023a} that, for pure states, a high support uncertainty implies KD-nonpositivity.  The precise statement of these results is recalled in Section~\ref{s:suppuncpure}. While this criterion provides a witness of KD-nonpositivity for pure states, it fails for mixed states, as explained in Section~\ref{s:suppuncmixed}. 
In fact, as we shall show,  any finite level of  depolarisation of a pure KD-positive state renders the aforementioned nonpositivity criterion invalid (Lemma~\ref{lem:uncsuppfail}).  Given that all experiments necessarily include some diversion from pure states, a generalisation of the previous results on pure states to mixed states is necessary for both foundational purposes and for practical applications of the KD distribution. 

We show, in Section~\ref{s:suppuncmixed} (Theorem~\ref{thm:Gen1}), that the convex roof of the support uncertainty, that we denote by $\nabcr$, provides a   necessary condition for a state to be a mixture  of pure  KD-positive states: 
\begin{equation}\label{eq:smallsuppunc}
\rho\in\conv{\EcalKDCpu}\Rightarrow \nabcr(\rho)\leq d+1.
\end{equation}
Convex-roof constructions have proven their efficiency as conceptual tools in other similar problems, such as the construction of witnesses and monotones of entanglement (entanglement of formation), of genuine nongaussianity~\cite{Albarelli_2018} and of continuous variable P nonpositivity for mixed states~\cite{yadin_operational_2018, kwon_nonclassicality_2019}. We recall that, if a function $s$ is defined on the extreme points $K_{\textrm{ext}}$ of a convex set $K$, then its convex roof $\scr$ is the largest convex function defined on the full convex set $K$, and that coincides  with $s$ on its extreme points. We recall in Appendix~\ref{s:convexroofs} the basic properties of convex roofs needed in the body of the paper.

A  natural question then arises: Is it true that
\begin{equation}
\rho\in\EcalKDC\Rightarrow \nabcr(\rho)\leq d+1?
\end{equation}
In other words, do all KD-positive states have low support uncertainty, or, equivalently, is $\nabcr$ a witness for nonpositivity? 
We answer this question in the negative by exhibiting, for a spin-$1$ system, a state $\rho\in\EcalKDC$ for which $\nabcr(\rho)>4$.  In other words, even for this completely incompatible system (definition in Section \ref{s:coinc}), not all KD-positive states have minimal support uncertainty.

We further note that, whereas the implication in Eq.~\eqref{eq:smallsuppunc} shows that mixtures of KD-positive pure states have low support uncertainty, the converse implication is not true, as explained in more detail in Section~\ref{s:suppuncpure}. As a result, $\nabcr$ is not a faithful witness of $\conv{\EcalKDCpu}$. 

We show in Section~\ref{s:NCroof} that a faithful witness of $\conv{\EcalKDCpu}$ is provided by the convex roof $\Ncr(\rho)$ of the total KD nonpositivity $\Ncal(\rho)$ \cite{GoHaDr19}, which is
defined as
\begin{equation}\label{eq:N}
\Ncal(\rho)=\sum_{i,j} |Q_{ij}(\rho)|\geq 1.
\end{equation}
The total KD nonpositivity $\Ncal(\rho)$ is  completely analogous to the Wigner negative volume, introduced in~\cite{kenfack_negativity_2004} and often used as a witness of Wigner negativity.
The total KD nonpositivity is well known to be  a faithful witness of $\EcalKDC$ since $\Ncal(\rho)>1$ if and only if $\rho$ is not KD positive and $\Ncal(\rho)=1$ otherwise:
\begin{equation}\label{eq:Nfaithful}
\Ncal(\rho)=1\ \Leftrightarrow \ \rho\in \EcalKDC.
\end{equation}
  The total KD nonpositivity is often thought of as measuring the amount by which the KD distribution of $\rho$ fails to be a genuine probability. This viewpoint explains why several authors think of $\Ncal(\rho)$ as a measure of nonclassicality. As a result, they refer to $\Ncal(\rho)$ as the total KD nonclassicality of $\rho$. As discussed above, we prefer to use the term total KD nonpositivity. 
 
 In Section~\ref{s:NCroof}, we will show that the convex roof $\Ncr(\rho)$ of $\Ncal(\rho)$ is a faithful witness of $\conv{\EcalKDCpu}$. From this, we will infer that,  in fact, for states $\rho$ belonging to $\EcalKDC$ but not to $\conv{\EcalKDCpu}$, $\Ncal(\rho)$ underestimates the nonpositive nature of the state $\rho$ and that its convex roof $\Ncr(\rho)$ is better suited for this purpose.

%%%%%%%%%%%%%%%%%

\section{Preliminaries: Previous results on KD distributions}\label{s:KDpumixpos}
\subsection{The geometry of the KD-positive states}
We sum up here results of~\cite{debievre2021, debievre2023a,langrenez2023characterizing,langrenezetal2024,XU2024,yangetal2023a}, on which this work relies. 
In view of what precedes, the simplest geometry for $\EcalKDC$ is obtained when
\begin{equation}\label{eq:Extegal}
\Acal\cup\Bcal = \EcalKDCpu = \EcalKDCext.
\end{equation}
In that case $\EcalKDC=\convAB$, and thus $\EcalKDC$ is then a polytope of minimal size with $2d$  known vertices. There are then no pure KD-positive states other than the basis states and there are no extreme KD-positive states that are mixed: see Fig.~\ref{fig:inclusions}.
The following result sums up a number of conditions on $\Acal$ and $\Bcal$ under which this situation prevails~\cite{langrenez2023characterizing,langrenezetal2024}.
\begin{Theorem} \label{thm:simplegeom}
Suppose $\mab=\min_{(i,j)\in\IntEnt{1}{d}^2} \left|\bracket{a_i}{b_j}\right|> 0$. Then,  Eq. \eqref{eq:Extegal} is true under any one of the following conditions:
\begin{enumerate}
	\item If $d=2$;
	\item If $d$ is a prime number and the transition matrix $U=(\bracket{a_i}{b_j})_{(i,j)\in\IntEnt{1}{d}^2}$ is equal to the discrete Fourier transform (DFT) matrix;
	\item In any dimension $d\geq 2$, for a dense open set of transition matrices  $U=(\bracket{a_i}{b_j})_{(i,j)\in\IntEnt{1}{d}^2}$ that has probability 1 among the unitary matrices. 
\end{enumerate}
\end{Theorem}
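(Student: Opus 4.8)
The plan is to reduce the statement to an assertion about extreme points and then verify it in the three regimes separately. Since $\Acal\cup\Bcal\subseteq\EcalKDCpu\subseteq\EcalKDCext$ holds with no hypothesis, Eq.~\eqref{eq:Extegal} is equivalent to the reverse inclusion $\EcalKDCext\subseteq\Acal\cup\Bcal$, and, as $\EcalKDC$ is compact convex and contains the polytope $\convAB$, this is in turn equivalent to the identity $\EcalKDC=\convAB$. Because every pure state lying in $\EcalKDC$ is automatically an extreme point of $\EcalKDC$, that identity will follow from two facts: (a) the only pure KD-positive states are the $2d$ basis projectors in $\Acal\cup\Bcal$; and (b) $\EcalKDC$ has no extreme point of rank $\geq2$. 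Indeed (a) and (b) together say $\mathrm{ext}(\EcalKDC)=\Acal\cup\Bcal$, and Krein--Milman then yields $\EcalKDC=\convAB$. For Case~1 ($d=2$) this splitting is not needed: placing $\Acal$ along the $z$-axis of the Bloch ball and, using $\mab>0$, placing $\Bcal$ along an axis at angle $\theta\in(0,\pi)$, the equalities $\mathrm{Im}\,Q_{ij}(\rho)=0$ confine $\rho$ to the plane of the two axes, and the inequalities $\mathrm{Re}\,Q_{ij}(\rho)\geq0$ cut out two mutually orthogonal slabs whose intersection with the Bloch disk is exactly the rectangle with the four basis states as vertices; thus $\EcalKDC=\convAB$ in one stroke.

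For Case~2 ($d$ prime, $U$ the discrete Fourier transform, so that $\mab=1/\sqrt d>0$ automatically), write $\psi_i=\bracket{a_i}{\psi}$ and $\widehat\psi_j=\bracket{b_j}{\psi}=\tfrac1{\sqrt d}\sum_i\omega^{ij}\psi_i$ with $\omega=e^{2\pi i/d}$, so that $Q_{ij}(\psi)=\tfrac1{\sqrt d}\,\omega^{ij}\,\psi_i\,\overline{\widehat\psi_j}$. For (a): if $\cket\psi$ is KD positive, then $Q_{ij}(\psi)\geq0$ forces $\arg\widehat\psi_j\equiv\arg\psi_i+\tfrac{2\pi ij}{d}\pmod{2\pi}$ whenever $\psi_i\neq0$ and $\widehat\psi_j\neq0$. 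If $\na(\psi)\geq2$, picking distinct $i_1\neq i_2$ with $\psi_{i_1},\psi_{i_2}\neq0$ and subtracting the two relations forces $\tfrac{2\pi(i_1-i_2)j}{d}$ to equal a fixed residue mod $2\pi$; since $d$ is prime and $0<|i_1-i_2|<d$, at most one $j$ can satisfy this, so $\nb(\psi)\leq1$ and $\cket\psi$ is an element of $\Bcal$. Hence either $\na(\psi)=1$ (so $\cket\psi\in\Acal$) or $\na(\psi)\geq2$ (so $\cket\psi\in\Bcal$): only basis states occur. (This rigidity is the arithmetic underlying the combination of the support-uncertainty upper bound of Section~\ref{s:suppuncpure} with the prime-dimensional uncertainty principle $\na(\psi)+\nb(\psi)\geq d+1$.) For (b): an extreme point of $\EcalKDC$ of rank $r$ with $2\leq r\leq d-1$ would lie in the relative interior of the face $F=\{\rho:\Ran{\rho}\subseteq\H'\}$ with $\H'=\Ran{\rho}$ and be pinned there by its active constraints $\bra{a_i}\rho\cket{b_j}=0$; I would exclude this by the same $\Z_d$-rigidity, applied now to the spectral vectors of $\rho$ and to the index sets $\{i:\cket{a_i}\not\perp\H'\}$ and $\{j:\cket{b_j}\not\perp\H'\}$, along the lines of~\cite{langrenezetal2024}.

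For Case~3, let $\mathcal G\subseteq\mathrm U(d)$ be the set of transition matrices with $\mab>0$ and $\EcalKDC=\convAB$. First, $\mathcal G$ is open: $\mab>0$ is an open condition, and a perturbation/compactness argument shows that a small change of $U$ --- which moves the $2d$ vertices of $\convAB$ continuously while keeping them distinct and extreme --- cannot create an extreme point of $\EcalKDC$ that stays bounded away from $\Acal\cup\Bcal$. Second, the complement $\mathcal B=\mathrm U(d)\setminus\mathcal G$ is semialgebraic: it is the set of $U$ satisfying the first-order formula ``$\mab(U)=0$, or there exists a Hermitian $\rho$ with $\rho\succeq0$, $\Tr\rho=1$, $Q_{ij}(\rho)\geq0$ for all $i,j$, and $\rho\notin\convAB$'', whose atoms are polynomial (in)equalities in the real and imaginary parts of the entries of $U$ and $\rho$ (membership in the $U$-dependent polytope $\convAB$ being itself semialgebraic), so Tarski--Seidenberg applies. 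Third, $\mathcal B$ has empty interior: Case~2 shows $\mathcal G$ contains the DFT in every prime dimension, and, more to the point, any $U$ can be perturbed into $\mathcal G$. Since a semialgebraic subset of $\mathrm U(d)$ with empty interior has dimension $<\dim\mathrm U(d)$ and hence Haar measure zero, $\mathcal G$ is both dense and of full Haar probability.

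The two steps that carry the real weight are (b) in Cases~2 and~3 --- ruling out the mixed, rank $\geq2$, extreme points of $\EcalKDC$ --- and, in Case~3, the emptiness of the interior of $\mathcal B$: each amounts to showing, uniformly in $U$, that the polynomial systems which would witness an extra extreme point of $\EcalKDC$ have no solution; part (a), the $d=2$ computation, and the semialgebraic and measure-theoretic bookkeeping are routine once those inputs are in hand.
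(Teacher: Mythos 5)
First, a point of calibration: the paper does not actually prove Theorem~\ref{thm:simplegeom}; it is stated as a summary of results established in~\cite{langrenez2023characterizing,langrenezetal2024}, so your attempt has to stand on its own. Your reduction of Eq.~\eqref{eq:Extegal} to $\EcalKDC=\convAB$, your Bloch-ball computation for $d=2$, and your phase-rigidity argument showing that for the prime-dimensional DFT the only \emph{pure} KD-positive states are basis states are correct and nicely self-contained. The gaps are exactly at the two places you yourself flag as carrying the weight. In Case~2, step (b) (no mixed extreme points, equivalently $\EcalKDC=\convAB$) is only gestured at: the $\Z_d$-rigidity does not transfer from pure to mixed states, because for mixed $\rho$ each $Q_{ij}(\rho)=\sum_k\lambda_k\bracket{b_j}{a_i}\bracket{a_i}{\varphi_k}\bracket{\varphi_k}{b_j}$ is a \emph{sum}, and nonnegativity of the sum imposes no phase relation on the individual spectral vectors, so ``apply the same rigidity to the spectral vectors along the lines of~\cite{langrenezetal2024}'' is not an argument. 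You also quietly restrict to extreme points of rank $2\le r\le d-1$; full-rank extreme points of $\EcalKDC$ are not excluded a priori (extremality can be enforced purely by active constraints $Q_{ij}(\rho)=0$ together with the reality constraints), so they must be dealt with too. This missing step is precisely the substantive content of the prime-DFT theorem in~\cite{langrenez2023characterizing}.

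In Case~3 the semialgebraicity of the bad set (Tarski--Seidenberg) and the implication ``semialgebraic with empty interior $\Rightarrow$ Haar-null'' are fine, but both load-bearing claims are unproved. Openness of $\mathcal G$ does not follow from a generic ``perturbation/compactness'' remark: continuity of the constraints gives at best that, for $U'$ near $U$, every state in $\EcalKDC(U')$ is \emph{close} to the moving polytope $\conv{\Acal'\cup\Bcal'}$, which is strictly weaker than membership in it, so new extreme points near (but outside) the polytope are not excluded by this reasoning. And the density claim ``any $U$ can be perturbed into $\mathcal G$'' is exactly the assertion of the theorem, so invoking it is circular; knowing that $\mathcal G$ contains the DFT in prime dimensions gives nonemptiness, not density. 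The cited proof in~\cite{langrenezetal2024} supplies what is missing here: explicit polynomial non-degeneracy conditions on the entries of $U$ under which every KD-positive state is forced into $\convAB$, so that the bad set lies in the zero set of finitely many polynomials (closed, nowhere dense, measure zero). Without an input of that kind, your Cases~2(b) and~3 remain genuine gaps, even though your Case~1 and the pure-state half of Case~2 are sound.
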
 
In specific cases, the structure of $\EcalKDC$ as a convex set can be considerably more complicated  (see Fig. \ref{fig:inclusions}.)  An example  is given in dimension $3$ in \cite{langrenez2023characterizing}, where the inclusions in Eq.~\eqref{eq:inclusions} are all strict:
\begin{equation}\label{eq:sharpincl}
\Acal\cup\Bcal\subsetneq \EcalKDCpu\subsetneq \EcalKDCext.
\end{equation}
This example is constructed in a spin $1$ system for appropriate choices of the spin component. Then, there exist pure KD-positive states that are not basis states and there exist extreme KD-positive states that are mixed. Such states cannot be written as convex combinations of KD-positive pure states. Moreover,  examples are provided in~\cite{langrenez2023characterizing} where 
\[
\Acal\cup\Bcal = \EcalKDCpu\subsetneq \EcalKDCext
\]
for arbitrarily high Hilbert space dimensions. In these cases, there do exist mixed KD-positive extreme states but  the only pure KD-positive states are the basis projectors. As a rule, given $\Acal$ and $\Bcal$, it is difficult to explicitly determine $\EcalKDCpu$ and more difficult still to determine $\EcalKDCext$ \cite{langrenezetal2024}. 
More information about $\EcalKDCpu$ for the case where the transition matrix is a Hadamard matrix (for all $(i,j)\in\IntEnt{1}{d}^2, \left|\bracket{a_i}{b_j}\right|^2 = \frac{1}{d} = \mab$) so that the two bases are mutually unbiased, can be found in~\cite{debievre2021,debievre2023a,XU2024}. Nevertheless, the third part of Theorem~\ref{thm:simplegeom} ensures that, with probability one, such complications and subtleties do not occur. 

\subsection{Complete incompatibility}\label{s:coinc}
%Because of the potentially intricate structure of $\EcalKDC$ as described above, we turn in the next sections to the identification of witnesses capable of distinguishing $\conv{\EcalKDCpu}$ from $\EcalKDC$.

The condition $\mab>0$, in Theorem~\ref{thm:simplegeom}, can be interpreted as a weak form of incompatibility for the measurements associated with $\Acal$ and $\Bcal$. It implies that, if a measurement in the $\Acal$ basis is made first, then a subsequent measurement in the $\Bcal$ basis can, with nonvanishing probability, yield any possible outcome. A significantly stronger notion of \emph{complete incompatibility} of two bases is introduced in~\cite{debievre2021}. Complete incompatibility occurs in scenarios where, after a first (coarse- or fine-grained) measurement in the $\Acal$ basis is made, a subsequent measurement in the $\Bcal$ basis is sure to perturb the outcome of the first measurement. In mathematical terms, complete incompatibility is equivalent to the requirement that all minors of the transition matrix $U$ between the two bases are nonvanishing; we will say that such a unitary matrix, and its underlying bases, are completely incompatible. This  implies a  strong form of incompatibility: for all $S,T\subseteq \llbracket 1,d\rrbracket$, with $1\leq |S|, |T|<d$ here $|X|$ denotes the cardinality of $X$), one has that $[\PiAcal(S),\PiBcal(T)]\not=0$,
where the projectors $\PiAcal(S), \PiBcal(T)$ are defined as
$$
\PiAcal(S)=\sum_{i\in S} |a_i\rangle\langle a_i|,\quad \PiBcal(T)=\sum_{j\in T} |b_j\rangle\langle b_j|. 
$$
Complete incompatibility has been introduced and studied in \cite{debievre2021, debievre2023a,yangetal2023a}. When the Hilbert space dimension is  $2$ or $3$, two bases are completely incompatible if and only if $\mab > 0$. This is not true in higher dimensions. Nevertheless, one then has the following result.
\begin{Theorem}
    In any finite dimension, there exists a dense, open, probability $1$ set of transition matrices that are completely incompatible.
\end{Theorem}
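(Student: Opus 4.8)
The plan is to prove the statement by exhibiting completely incompatible matrices as the complement of a finite union of proper algebraic subvarieties of the unitary group $U(d)$. Recall that complete incompatibility of a transition matrix $U$ means that every square submatrix (minor) of $U$ is nonzero. Fix $d$. For each choice of a subset $S\subseteq\IntEnt{1}{d}$ of rows and $T\subseteq\IntEnt{1}{d}$ of columns with $|S|=|T|=k$, $1\le k\le d$, the map $U\mapsto \det(U_{S,T})$ is a polynomial in the real and imaginary parts of the entries of $U$, restricted to the real-algebraic manifold $U(d)$. Let $Z_{S,T}=\{U\in U(d): \det(U_{S,T})=0\}$. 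Then the set of completely incompatible matrices is exactly $U(d)\setminus\bigcup_{S,T} Z_{S,T}$, a finite intersection of the complements of these zero sets.

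The key step is to show that each $Z_{S,T}$ is a \emph{proper} closed subset of $U(d)$ with empty interior and Haar-measure zero. Since $U(d)$ is a connected real-analytic manifold and $U\mapsto\det(U_{S,T})$ is the restriction of a (complex, hence real-analytic) polynomial, it suffices to exhibit a \emph{single} $U_0\in U(d)$ with $\det((U_0)_{S,T})\ne 0$: a real-analytic function on a connected analytic manifold that is not identically zero has a zero set that is closed, nowhere dense, and of measure zero (for the measure-zero part one invokes the standard fact that the zero set of a nonzero real-analytic function on a connected manifold is null; see e.g.\ the Fubini-type argument for analytic varieties). A convenient choice of $U_0$ is the DFT matrix $U_0=\tfrac{1}{\sqrt d}(\omega^{(i-1)(j-1)})_{i,j}$ with $\omega=e^{2\pi i/d}$: by a classical theorem of Chebotar\"ev (when $d$ is prime) and its extensions, or more simply by a direct Vandermonde computation, every minor of the DFT matrix is nonzero when $d$ is prime; for general $d$ one may instead take a generic diagonal rescaling or a product of mutually unbiased constructions, or simply argue that \emph{some} unitary with all minors nonzero exists by a short inductive/perturbative construction. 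The point is only existence of one good $U_0$, which makes each $Z_{S,T}$ a proper analytic subvariety.

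Once each $Z_{S,T}$ is known to be closed, nowhere dense, and Haar-null, the conclusion is immediate: the finite union $\bigcup_{S,T}Z_{S,T}$ is closed, nowhere dense, and Haar-null, so its complement is open, dense, and of full Haar measure (probability $1$), and every matrix in this complement is completely incompatible by definition.

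\medskip

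The main obstacle is the second step, and specifically producing \emph{one} unitary matrix all of whose minors are nonzero in \emph{every} dimension $d$ (not just prime $d$, where Chebotar\"ev's theorem on the DFT matrix settles it). I expect this to be handled either by citing the existing construction in~\cite{debievre2021, debievre2023a, yangetal2023a} (where completely incompatible bases are introduced and shown to exist in all dimensions), or by a self-contained perturbation argument: start from any unitary, note that the bad set $\bigcup_{S,T} Z_{S,T}$ is a finite union of proper subvarieties of the \emph{irreducible} variety $U(d)$ (irreducibility/connectedness of $U(d)$ being the crucial input ensuring a proper subvariety cannot fill it), hence cannot cover $U(d)$; any point outside it is the desired $U_0$. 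This last phrasing in fact makes the existence of $U_0$ and the density statement two facets of the same fact, so the cleanest write-up is to prove the measure/topology statement first for each $Z_{S,T}$ individually and deduce existence as a corollary, rather than the other way around.
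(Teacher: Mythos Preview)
Your approach is correct and essentially the same as the paper's, which simply cites \cite{debievre2023a} for the open-and-dense claim and invokes Proposition~2.4 of \cite{langrenezetal2024} (the zero set of a nontrivial polynomial in the entries of $U$ has vanishing Haar measure) for the probability-one claim, after noting that complete incompatibility is precisely the complement of finitely many such zero sets. Your detour through Chebotar\"ev and a single global $U_0$ is, as you yourself recognize in the final paragraph, unnecessary: for each fixed pair $(S,T)$ a suitable permutation matrix already has that particular minor equal to $\pm 1$, so every $Z_{S,T}$ is individually a proper analytic subvariety and the rest follows.
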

The ``open and dense'' part of this result was shown in~\cite{debievre2023a}. The ``probability one'' part is an immediate result of Proposition 2.4 in~\cite{langrenezetal2024}, which states that the zero set of a polynomial in the matrix elements of $U$ form a set of vanishing Haar measure in the unitary group. Since complete incompatibility requires all minors of $U$ to be nonvanishing, the result then follows immediately. 

Let us finally point out that two bases whose transition matrix coincides with the discrete Fourier transform (DFT) are completely incompatible if and only if the Hilbert space dimension is prime~\cite{debievre2023a}.

\subsection{Support uncertainty as a nonpositivity witness: pure states}\label{s:suppuncpure}
 Given a pure state $|\psi\rangle\in\Hcal$, one defines $\na(\psi)$ [respectively $\nb(\psi)$] to be the number of nonvanishing components of $|\psi\rangle$ on $\Acal$ [respectively $\Bcal$]:
\begin{eqnarray}\label{eq:naetnb}
\na(\psi)=\left|\{i\in\llbracket 1,d\rrbracket \mid \langle a_i|\psi\rangle\not=0\}\right|,\\ \nb(\psi)=\left|\{j\in\llbracket 1,d\rrbracket \mid \langle b_j|\psi\rangle\not=0\}\right|.
\end{eqnarray}
It was proven in~\cite{debievre2021} that, if $\mab>0$, then
\begin{equation}\label{eq:nab}
\nab(\psi):=\na(\psi)+\nb(\psi)>d+1
\end{equation}
implies that $|\psi\rangle$ is not KD positive. Equivalently, one has
\begin{equation}\label{eq:KDplusnab}
    |\psi\rangle\in \EcalKDCpu\ \Rightarrow\ \nab(\psi)\leq d+1.
\end{equation}
The \emph{support uncertainty} $\nab(\psi)$ therefore provides a KD nonpositivity witness for pure states. However, $\nab(\psi)$ is not a faithful witness:  there may exist states $|\psi\rangle$ for which $\nab(\psi)\leq d+1$ that are not KD positive. Examples of bases $\Acal$ and $\Bcal$ for which this occurs are provided in~\cite{debievre2021, debievre2023a}.  The link between positivity and uncertainty is therefore not as tight for the KD distribution as for the Wigner distribution and the P function.  Indeed, for the Wigner distribution,  $\det\gamma$ [Eq. \eqref{eq:covmatrix}] provides a faithful Wigner nonpositivity witness for pure states and, for the P function, $\Delta X^2+\Delta P^2$ provides a faithful P nonpositivity witness.

Nevertheless, when the bases $\Acal$ and $\Bcal$ are completely incompatible, more can be said. First, in this case, for all pure states $|\psi\rangle$
\begin{equation}\label{eq:nabcomplinc}
\nab(\psi)\geq d+1.
\end{equation}
Hence, for all pure KD-positive states, $\nab(\psi)= d+1$~\cite{debievre2021}. This is somewhat analogous to what happens with the Wigner or P functions for systems with two conjugate variables,  which are positive for pure states if and only if the state is has minimal uncertainty in the sense explained above. The difference is that, with the KD distribution, even if $\Acal$ and $\Bcal$ are completely incompatible, it is still possible for states with minimal support uncertainty to  be KD nonpositive. In other words, $\nab$ is not a faithful KD nonpositivity witness, even for completely incompatible bases.

\section{Support uncertainty as a nonpositivity witness: mixed states}\label{s:suppuncmixed}
Our main result in this section is Theorem~\ref{thm:Gen1}, that extends to mixed states the relation between KD positivity and support uncertainty shown in~\cite{debievre2021, debievre2023a} for pure states and recalled in the previous section. Before doing so, we point out that a naive extension of the notion of support uncertainty to mixed states $\rho$ is not satisfactory. Let us define
\[
\na(\rho) = |\{i\in\IntEnt{1}{d} \mid \bra{a_i}\rho\cket{a_i}\not=0\}| ,\quad \nb(\rho)=|\{j\in\IntEnt{1}{d} \mid | \bra{b_j}\rho\cket{b_j}\not=0\}|
\]
and 
$$
\nab(\rho):=\na(\rho)+\nb(\rho)\leq 2d.
$$
We  observe that, if $\mab > 0$, and $\rho = \frac{1}{d}\id{d}$ , then $\nab(\rho) = 2d>d+1$. Since $\rho = \frac{1}{d}\id{d}$ belongs to  $\conv{\EcalKDCpu}$, it becomes  clear that, if we define $\nab(\rho)$ to be the support uncertainty of $\rho$, then Eq.~\eqref{eq:KDplusnab} does not extend to mixed KD-positive states. In fact, the following lemma generalizes this observation.
\begin{Lemma}\label{lem:uncsuppfail}
Let $\mab>0$. Let  $\rho\in \convAB$. Suppose $\rho$ is written in the form
\begin{equation}\label{eq:convdecomp}
\rho=\sum_{i=1}^d\lambda_i|a_i\rangle\langle a_i|+\sum_{j=1}^d\mu_j|b_j\rangle\langle b_j|,
\end{equation}
with $0\leq \lambda_i,\mu_j\leq 1$.  
Then the following statements are true:\\
(i) If there exist $i,j$ for which $\lambda_i\not=0\not=\mu_j$, then $\nab(\rho)=2d$;\\
(ii) If $\mu_j=0$  for all $j$, then 
$$
\nab(\rho)=d+\left|\{i| \lambda_i\not=0 \}\right|;
$$
(iii) If $\lambda_i=0$ for all $i$, then 
$$
\nab(\rho)=d+\left|\{j| \mu_j\not=0 \}\right|.
$$
In particular, if $\rho$ belongs to the interior of the convex polytope $\convAB$, then $\nab(\rho)=2d$. 
\end{Lemma}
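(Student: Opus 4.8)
The plan is to compute the diagonal matrix elements $\langle a_k|\rho|a_k\rangle$ and $\langle b_\ell|\rho|b_\ell\rangle$ directly from the decomposition in Eq.~\eqref{eq:convdecomp} and then read off $\na(\rho)$ and $\nb(\rho)$ from when these quantities vanish. Using orthonormality of each basis, I would first establish
\begin{equation}\label{eq:diagplan}
\langle a_k|\rho|a_k\rangle=\lambda_k+\sum_{j=1}^d\mu_j\,|\bracket{a_k}{b_j}|^2,\qquad \langle b_\ell|\rho|b_\ell\rangle=\mu_\ell+\sum_{i=1}^d\lambda_i\,|\bracket{b_\ell}{a_i}|^2,
\end{equation}
for all $k,\ell\in\IntEnt{1}{d}$.

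The engine of the argument is the hypothesis $\mab>0$: every transition amplitude satisfies $|\bracket{a_k}{b_j}|^2\geq\mab^2>0$. Hence $\sum_j\mu_j|\bracket{a_k}{b_j}|^2$ is a nonnegative combination with \emph{strictly positive} coefficients, so it is strictly positive as soon as some $\mu_j\neq 0$, and equals $0$ precisely when all $\mu_j=0$ — and this conclusion is independent of $k$. The same holds, with the roles of $\lambda$ and $\mu$ exchanged, for the second identity in Eq.~\eqref{eq:diagplan}. From this I would extract the two key dichotomies: $\na(\rho)=d$ whenever some $\mu_j\neq 0$, while $\na(\rho)=|\{i\mid\lambda_i\neq 0\}|$ when all $\mu_j=0$; and symmetrically $\nb(\rho)=d$ whenever some $\lambda_i\neq 0$, while $\nb(\rho)=|\{j\mid\mu_j\neq 0\}|$ when all $\lambda_i=0$.

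The three cases then follow by bookkeeping. For (i), the existence of $i,j$ with $\lambda_i\neq 0\neq\mu_j$ places us on the positive side of both dichotomies, giving $\na(\rho)=\nb(\rho)=d$ and hence $\nab(\rho)=2d$. For (ii), all $\mu_j=0$ gives $\na(\rho)=|\{i\mid\lambda_i\neq 0\}|$; since $\Tr\rho=1$ forces some $\lambda_i\neq 0$, the second dichotomy yields $\nb(\rho)=d$, so $\nab(\rho)=d+|\{i\mid\lambda_i\neq 0\}|$. Case (iii) is identical after swapping $\Acal\leftrightarrow\Bcal$. These three cases are moreover exhaustive, since if we are not in case (i) then either all $\lambda_i$ or all $\mu_j$ vanish.

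For the final ``in particular'' statement, I would reduce an interior point to case (i). The polytope $\convAB$ is spanned by the $2d$ projectors in $\Acal\cup\Bcal$, and a standard fact of convex geometry is that the relative interior of the convex hull of a finite set consists exactly of the strict convex combinations — those assigning a positive weight to every vertex. Thus an interior $\rho$ admits a decomposition of the form~\eqref{eq:convdecomp} with all $\lambda_i>0$ and all $\mu_j>0$, which is case (i) and gives $\nab(\rho)=2d$. The main (indeed, only) delicate point is this last step: the decomposition~\eqref{eq:convdecomp} is generally non-unique because the $2d$ projectors are affinely dependent, so some care is needed to guarantee that an interior point truly possesses an all-positive decomposition. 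Invoking the relative-interior characterization handles this cleanly — ``interior'' being understood as relative interior, the only sensible reading when the polytope is lower dimensional — and since $\nab(\rho)$ is computed intrinsically from the diagonal elements in Eq.~\eqref{eq:diagplan}, its value does not depend on which decomposition is chosen.
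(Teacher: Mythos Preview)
Your argument for parts (i)--(iii) is correct and essentially identical to the paper's: both compute the diagonal entries $\langle a_k|\rho|a_k\rangle$ and $\langle b_\ell|\rho|b_\ell\rangle$ from the decomposition and use $\mab>0$ to force strict positivity whenever a weight from the opposite basis is present. For the ``in particular'' clause the two proofs diverge slightly. You invoke the general convex-geometric fact that any point in the relative interior of $\conv{V}$, for finite $V$, admits a convex representation with strictly positive weight on every element of $V$; this lands an interior $\rho$ directly in case~(i). The paper instead cites a polytope-specific characterization (from~\cite{langrenez2023characterizing}, Appendix~A): $\rho$ is interior to $\convAB$ if and only if it can be written as in Eq.~\eqref{eq:convdecomp} with $\lambda_1\cdots\lambda_d\neq 0$ \emph{or} $\mu_1\cdots\mu_d\neq 0$; it then branches, using case~(i) if some $\mu_j\neq 0$ and case~(ii) (giving $d+d=2d$) otherwise. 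Your route is more self-contained, avoiding the external reference; the paper's route uses a sharper description of the interior tailored to this particular polytope.
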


\begin{proof}
(i) Suppose, without loss of generality, that $\lambda_1\not=0\not=\mu_1$. Then, for any $i,j$
\begin{eqnarray*}
\langle a_i|\rho|a_i\rangle&\geq& \lambda_1|\langle a_i|a_1\rangle|^2 +\mu_1|\langle a_i|b_1\rangle|^2>0,\\
\langle b_j|\rho|b_j\rangle&\geq& \lambda_1|\langle b_j|a_1\rangle|^2 +\mu_1|\langle b_j|b_1\rangle|^2>0.
\end{eqnarray*}
Statements (ii)  \& (iii)  follow from a direct computation using that $\mab>0$. 
To prove the last statement, we note that it was proven in~\cite{langrenez2023characterizing} (Appendix~A) that $\rho$ is in the interior of $\convAB$ if and only if  it can be written in the form Eq.~\eqref{eq:convdecomp} with 
\begin{equation}
\lambda_1\lambda_2\dots\lambda_d\not=0\ \textrm{or}\ \mu_1\mu_2\dots\mu_d\not=0.
\end{equation}
If $\lambda_1\lambda_2\dots\lambda_d\not=0$ and  $\mu_j\not=0$ for at least one value of $j$, then the conclusion follows from~(i). If, on the other hand all $\mu_j=0$, then it follows from~(ii). 
\end{proof}

The Lemma implies that the minimal value of $\nab$ on the polytope $\convAB\subset \conv{\EcalKDCpu}$ is $d+1$ and that this value is reached only on its vertices. In addition, the global maximum of $\nab$ is reached on each point of the interior of $\convAB$. On the facets of $\convAB$ one finds all intermediate integer values. Since all states in $\convAB$ are KD positive, this amply demonstrates that there is no link between a small value of $\nab(\rho) $ and KD positivity of $\rho$.

To remedy this situation, we provide below a proper definition of the ``support uncertainty'' of a mixed state using a convex roof construction. This will allow us to show that the results of~\cite{debievre2021, debievre2023a}  linking KD positivity and support uncertainty extend naturally to mixed states as well. 
Our work relies on some technical properties of convex roofs that are provided in Appendix \ref{s:convexroofs}.

For any state $\rho$, we define $D_{\rho}$ to be the set of all possible convex decompositions of $\rho$ on pure states, \textit{i.e.}
$$
D_{\rho}= \bigcup_{n\in\N^{*}}\left\{(\lambda_{i},\cket{\varphi_i})_{i\in \IntEnt{1}{n}}\in([0,1]\times \H_{1})^{n}\left| \sum_{i\in \IntEnt{1}{n}} \lambda_{i}\cket{\varphi_i}\bra{\varphi_i} = \rho \right.\right\} , 
$$
with $\H_{1} = \left\{\cket{\psi},\bracket{\psi}{\psi} = 1\right\}$ denoting the unit sphere in $\H$.

\begin{defini}\label{def:Ext1}
Let $\rho$ be a density matrix. Then we define the support uncertainty $\nabcr(\rho)$  of $\rho$ as
%and
\[
\nabcr(\rho) = \inf_{(\lambda_{i},\cket{\varphi_i})_{i\in \IntEnt{1}{n}}\in D_{\rho}} \sum_{i=1}^{n} \lambda_{i} \nab(\varphi_{i}).\]  
\end{defini}

By Lemma \ref{lem:CR2}, the infimum is in fact a minimum. In other words, there exists $n\in\N^{*}$ and $(\lambda_{i},\cket{\varphi_i})_{i\in \IntEnt{1}{n}}$ such that $\rho = \sum_{i=1}^{n} \lambda_{i}\cket{\varphi_{i}}\bra{\varphi_{i}}$ and 

\[\nabcr(\rho) = \sum_{i=1}^{n} \lambda_{i} n_{\mcl{A},\mcl{B}}(\varphi_{i}).
\]
We then have the following result.
 
\begin{Theorem}\label{thm:Gen1}
    Suppose that $\mab >0$. If  $\rho\in\conv{\EcalKDCpu}$,  then $\nabcr(\rho) \leqslant d+1$. If $\Acal$ and $\Bcal$ are completely incompatible, then $\rho\in\conv{\EcalKDCpu}$ implies that $\nabcr(\rho)=d+1$. 
\end{Theorem}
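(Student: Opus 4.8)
The plan is to derive both statements directly from the corresponding pure-state facts recalled in Section~\ref{s:suppuncpure}, combined with the definition of $\nabcr$ as an infimum over pure-state decompositions.

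For the first implication, suppose $\rho\in\conv{\EcalKDCpu}$. By definition of the convex hull (and Carath\'eodory's theorem), we may write $\rho=\sum_{i=1}^n\lambda_i\cket{\psi_i}\bra{\psi_i}$ with $\lambda_i\geq 0$, $\sum_i\lambda_i=1$, and each $\cket{\psi_i}\in\EcalKDCpu$. This decomposition is one element of $D_{\rho}$, so Definition~\ref{def:Ext1} gives
\[
\nabcr(\rho)\leq \sum_{i=1}^n\lambda_i\,\nab(\psi_i).
\]
Since $\mab>0$ and each $\cket{\psi_i}$ is KD positive, Eq.~\eqref{eq:KDplusnab} yields $\nab(\psi_i)\leq d+1$ for every $i$, hence $\nabcr(\rho)\leq (d+1)\sum_i\lambda_i=d+1$. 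Note that no appeal to the attainment of the infimum (Lemma~\ref{lem:CR2}) is needed here: exhibiting one admissible decomposition suffices.

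For the second statement, assume $\Acal$ and $\Bcal$ are completely incompatible. Then Eq.~\eqref{eq:nabcomplinc} says $\nab(\varphi)\geq d+1$ for every pure state $\cket{\varphi}$, with no positivity hypothesis. Consequently, for every $(\lambda_i,\cket{\varphi_i})_{i\in\IntEnt{1}{n}}\in D_{\rho}$ one has $\sum_i\lambda_i\,\nab(\varphi_i)\geq (d+1)\sum_i\lambda_i=d+1$; taking the infimum over $D_{\rho}$ yields $\nabcr(\rho)\geq d+1$ for \emph{every} density matrix $\rho$. Combining this universal lower bound with the upper bound $\nabcr(\rho)\leq d+1$ just established for $\rho\in\conv{\EcalKDCpu}$ gives $\nabcr(\rho)=d+1$, as claimed.

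I do not expect a serious obstacle: the theorem is essentially the observation that the convex roof of a quantity which is $\leq d+1$ on the pure states of interest remains $\leq d+1$ on their convex hull, while complete incompatibility supplies a matching lower bound already at the pure-state level, which the convex roof inherits. The only point requiring a moment's care is to keep the two quantifiers straight: the upper bound exploits the \emph{existence} of a KD-positive pure-state decomposition (so it is unaffected by the fact that $D_{\rho}$ also contains many non-positive decompositions), whereas the lower bound uses a property valid for \emph{all} pure states (so it survives passage to the infimum).
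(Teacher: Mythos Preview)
Your proof is correct and follows essentially the same approach as the paper's own proof: exhibit a KD-positive pure-state decomposition to get the upper bound via Eq.~\eqref{eq:KDplusnab}, and use the universal lower bound Eq.~\eqref{eq:nabcomplinc} on all pure states to bound the infimum from below in the completely incompatible case. The only differences are cosmetic---your mention of Carath\'eodory's theorem and the explicit remark that Lemma~\ref{lem:CR2} is not needed---but the argument is identical in substance.
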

One way to interpret this result is to say that $\nabcr$ is a witness for $\conv{\EcalKDCpu}$: if $\nabcr(\rho)>d+1$, then $\rho\not\in\conv{\EcalKDCpu}$. 
In addition, when the bases $\Acal$ and $\Bcal$ are completely incompatible, $\nabcr$ reaches its minimum ($d+1$), on the set $\conv{\EcalKDCpu}$. In other words, all mixtures of pure KD-positive states then have minimal support uncertainty.  
\begin{proof}
As $\mab > 0$, if $\cket{\psi}\in\EcalKDCpu$, then, by Eq.~\eqref{eq:KDplusnab}, $\nab(\psi)\leqslant d+1$. Now, if $\rho \in \conv{\EcalKDCpu}$, there exists $n\in\N^{*}$ such that
\[
\rho = \sum_{i=1}^{n} \lambda_i \cket{\psi_i}\bra{\psi_i}
\]
with $\sum_{i=1}^{n} \lambda_i=1, \lambda_i\in[0,1]$ for all $i\in\IntEnt{1}{n}$,  and $\cket{\psi_i}\in\EcalKDCpu$ for all $i\in\IntEnt{1}{n}$. Thus, by definition, 
\[
\nabcr(\rho) \leqslant \sum_{i=1}^{n}\lambda_i\nab(\psi_i)\leqslant \sum_{i=1}^{n}\lambda_i (d+1) \leqslant d+1.
\]
This concludes the first part of the proof. Now, if $\Acal$ and $\Bcal$ are completely incompatible, it follows from Eq.~\eqref{eq:nabcomplinc} that, for all pure $\cket{\psi}$, $\nab(\psi) \geqslant d+1$. Thus, if\[
\rho = \sum_{i=1}^{n} \lambda_i \cket{\psi_i}\bra{\psi_i}
\]
with $n\in\N^{*}, \sum_{i=1}^{n} \lambda_i=1, \lambda_i\in[0,1]$ for all $i\in\IntEnt{1}{n}$,  and $\cket{\psi_i}\in\H_1$ for all $i\in\IntEnt{1}{n}$, it follows that:
\[
\sum_{i=1}^{n}\lambda_i\nab(\psi_i) \geqslant \sum_{i=1}^{n}\lambda_i(d+1) = d+1.
\]
As the right hand side is independent of the decomposition, we conclude that:
\[
\nabcr(\rho) \geqslant d+1.
\]
This inequality, together with the first part of the proof, implies that, if $\rho\in\conv{\EcalKDCpu}$, then $\nabcr(\rho) = d+1$ when $\Acal$ and $\Bcal$ are completely incompatible.
\end{proof}
In view of this result, the question arises naturally whether $\nabcr$ is a witness for all KD-positive states, and not only for $\conv{\EcalKDCpu}$. In other words, the question is whether the implication 
\begin{equation}\label{eq:question2}
\rho\in\EcalKDC\ \Rightarrow \ \{\nabcr(\rho)\leq d+1\} 
\end{equation}
is true or not for any choice of $\Acal$ and $\Bcal$.

To answer this question, we first recall that, according to the third part of Theorem~\ref{thm:simplegeom},  $\EcalKDC=\convAB$ holds in any dimension for an open dense set of unitary transition matrices $U_{ij}=\langle a_i|b_j\rangle$ of full Haar measure in the unitary group. In those cases, $\EcalKDCpu=\Acal\cup \Bcal$, so that the implication in Eq.~\eqref{eq:question2} is satisfied as as consequence of Theorem~\ref{thm:Gen1} and therefore $\nabcr$ is a witness for $\EcalKDC$ with probability $1$ in any dimension $d$.
The question is therefore if the implication in Eq.~\eqref{eq:question2} holds for all choices of $\Acal$ and $\Bcal$. We show that this is not the case by further examining the example of a spin-$1$ system in dimension $3$ for which it was shown in~\cite{langrenez2023characterizing} that Eq.~\eqref{eq:sharpincl} holds for an appropriate choice of two spin observables. We explicitly exhibit, in Appendix~\ref{App:spin1}, a $d=3$ state $\rho$ in $\EcalKDC$  for which $\nabcr(\rho)>4$.

\section{The total KD nonpositivity roof}\label{s:NCroof}
We have seen, in Eq.~\eqref{eq:Nfaithful}, that the total KD nonpositivity $\Ncal$ is a faithful KD nonpositivity witness. 
We now briefly show, using the results of Appendix~\ref{s:convexroofs}, that the convex roof $\Ncr$ of the nonpositivity $\Ncal$ is a faithful witness of $\conv{\EcalKDCpu}$.

We define $\Ncr$ as the convex roof of $\Ncal$:
 \begin{equation}
\Ncr(\rho) = \inf_{(\lambda_{i},\varphi_i)_{i\in \IntEnt{1}{n}}\in D_{\rho}} \sum_{i=1}^{n} \lambda_{i} \mcl{N}(\varphi_{i}).
\end{equation}
Since $\Ncal$ is convex, one has for any $\rho$,
$$
\Ncal(\rho)\leq \Ncr(\rho).
$$
The total KD nonpositivity $\Ncal$ is  a continuous function that has $1$ as its minimal value. For pure states, $\Ncal(\rho)=1$ if and only if $\rho \in  \EcalKDCpu$. It thus follows from Proposition~\ref{prop:CR1}  that 
\begin{equation}\label{eq:NCroof1}
 \Ncr(\rho)=1 \Leftrightarrow \rho\in\conv{\EcalKDCpu}.
\end{equation}
In other words, whereas $\Ncal$ is a faithful witness for $\EcalKDC$, $\Ncr$ is a faithful witness for $\conv{\EcalKDCpu}$. 
In particular, 
\begin{equation}
   \rho\in\EcalKDC\setminus \conv{\EcalKDCpu}\Rightarrow \Ncal(\rho)=1, \Ncr(\rho)>1. 
\end{equation}
That is, if $\rho$ is KD positive but cannot be written as a convex combination of pure KD-positive states, then the convex roof of the total KD nonpositivity is strictly greater than one. Examples of such states are given in Appendix~\ref{App:spin1}.

Our results invite the following comparison with entanglement theory. Consider a bi-partite state $\rho^{\textrm{A,B}}$. Its von Neumann entropy is 
\begin{equation}
    S\left(\rho^{\textrm{A,B}}\right)=-\Tr\rho^{\textrm{A}} \ln\rho^{\textrm{A}} =-\Tr\rho^{\textrm{B}} \ln\rho^{\textrm{B}}  ,
\end{equation}
where  $\rho^{\textrm{A}} = \Tr^{\textrm{B}} \left( \rho^{\textrm{A,B}} \right)$ is the partial trace of $\rho^{\textrm{A,B}}$ over the $\textrm{B}$-subsystem, etc. $S\left(\rho^{\textrm{A,B}}\right)$ is a concave function of $\rho^{\textrm{A,B}}$. If $\rho^{\textrm{A,B}}$ is pure, $S\left(\rho^{\textrm{A,B}}\right)$ quantifies its entanglement. However, if $\rho^{\textrm{A,B}}$ is mixed, $S(\rho^{\textrm{A,B}})$ can be large although there is no entanglement.

The entanglement of formation $E_{\textrm f}(\rho^{\textrm{A,B}})$ was constructed to quantify entanglement of  bi-partite mixed states. $E_{\textrm f}(\rho^{\textrm{A,B}})$ is defined as the convex roof, over pure states, of the von Neumann entropy:
\begin{equation}
    E_{\textrm f}(\rho^{\textrm{A,B}}) := 
    \inf_{(\lambda_{i},\varphi_i)_{i\in \IntEnt{1}{n}}\in D_{\rho^{\textrm{A,B}}}} \sum_{i=1}^{n}  \lambda_{i} S\left( \varphi_i^{\textrm{A,B}}\right) .
\end{equation}
The concavity of the von Neumann entropy implies that 
\begin{equation}
    S\left(\rho^{\textrm{A,B}}\right) \geqslant E_{\textrm f}\left(\rho^{\textrm{A,B}}\right):=S^{\textsc{cr}}\left(\rho^{\textrm{A,B}}\right).
\end{equation}
For example, if $\rho^{\textrm{A,B}} $ is the maximally mixed state, then its von Neumann entropy equals $\ln d>0$, while its entanglement of formation vanishes. In this sense, the von Neumann entropy overestimates the entanglement of the state. 

The situation is the converse for the total KD nonpositivity $\Ncal(\rho)$, which is a convex function of its argument. Its convex roof $\Ncr(\rho)$ satisfies
\begin{equation}
    \Ncal(\rho)\leq \Ncr(\rho).
\end{equation}
The total nonpositivity $\Ncal(\rho)$ can therefore underestimate $\Ncr(\rho)$. 
In fact, any state $\rho\in\EcalKDC\setminus\conv{\EcalKDCpu}$ satisfies
$$
1=\Ncal(\rho)< \Ncal^{\textsc{cr}}(\rho).
$$
 Note that such states cannot be written as convex mixtures of pure KD-positive states: any such mixture must contain at least one pure state that is not KD positive. (See Fig.~\ref{fig:inclusions}; examples are given in Ref.~\cite{langrenez2023characterizing}.) This feature of the state $\rho$ is not picked up by its total nonpositivity $\Ncal(\rho)$, but it is by its convex roof $\Ncr(\rho)$.

%%%%%%%%%%%%%%%%%%%%%

\section{Conclusion and discussion}\label{s:concdisc}
Convex roofs are notoriously hard to compute and therefore criticized for being of limited practical value. Nevertheless, our results illustrate again their power as conceptual tools in quantum theory.  We showed in particular that the convex roof  of a quantum state's support uncertainty $\nabcr$ provides a witness---be it a nonfaithful one---for the complement of the set $\conv{\EcalKDCpu}$. Thus,  $\nabcr$ links KD-positivity to an uncertainty principle: mixtures of pure KD-positive states have low support uncertainty.

The convex roof $\Ncr$ of the total KD nonpositivity $\Ncal$, on the other hand, is a faithful witness of the set $\conv{\EcalKDCpu}$. Since $\Ncal$ itself is a faithful witness of $\EcalKDC$ (the set of \textit{all} KD-positive states) these two witnesses together  allow one to distinguish $\conv{\EcalKDCpu}$ from the potentially larger set $\EcalKDC$. In particular, for a  KD-positive state $\rho$  that is not a mixture of KD-positive pure states, $\Ncal(\rho)=1$, while $\Ncr(\rho)>1$. 
These results provide a first insight in the physical interpretation of such states. Note that, in a resource theory of KD-nonpositivity, they would not be considered ``free'' states: only states in $\conv{\EcalKDCpu}$ would be. 

We point out that the problem of distinguishing $\EcalKDC$ from $\conv{\EcalKDCpu}$ has an analog in the context of the Wigner function. There, the set of states obtained as convex
combinations of pure Gaussian states is the analog of $\conv{\EcalKDCpu}$, whereas the larger set of all Wigner-positive states is the analog of $\EcalKDC$.  The identification of these sets through witnesses or
monotones has attracted considerable attention in the last decade in the context of degaussification~\cite{Albarelli_2018, hertzdebievre2023,Genoni13,Takagi18}. This is because, while convex combinations of Gaussian states are not Gaussian, they are obtained
by a mixing operation which is considered classical or, in the context of resource theories, “free”. We are hopeful that the tools to distinguish $\EcalKDC$ from $\conv{\EcalKDCpu}$,  presented in this article, will aid the construction of analogous tools for the Wigner function.

\medskip

\noindent \textit{Acknowledgements: } This work was supported in part by the Agence Nationale de la Recherche under grant ANR-11-LABX-0007-01 (Labex CEMPI), by the Nord-Pas de Calais Regional Council and the European Regional Development Fund through the Contrat de Projets \'Etat-R\'egion (CPER), and by  the CNRS through the MITI interdisciplinary programs. We thank Girton College, Cambridge, for support of this work. D.R.M. Arvidsson-Shukur thanks Nicole Yunger Halpern for useful discussions.

\newpage

\appendix 
\appendixpage
\renewcommand{\thesection}{\Alph{section}}
\renewcommand{\thesubsection}{\Alph{section}.\arabic{subsection}}

\section{Convex roofs}\label{s:convexroofs}
Our goal in this appendix is to show Lemma~\ref{lem:CR2}, which is a result on  convex roofs, needed in the main text. More specifically, it locates the set where a convex roof reaches its minimum.

First, we collect some known facts about convex roofs, some of which are also needed in the main part of the paper. The discussion is self-contained but we refer to~\cite{uhlmann2010}, that we follow, for more details.
Let $K$ be a convex and compact subset  of $\R^{q}$. Let $s: K_{\mathrm{ext}} \mapsto \R$ where $K_{\mathrm{ext}}$ designates the set of extreme points of $K$. We suppose throughout that $s$ is bounded below. %$\exists m\in\R, \forall x\in K_{ext}, f(x) \geqslant m$.
In what follows, we will say
$S$ is a convex extension of $s$  on $K$ if $S$ is convex and $S=s$ on $K_{\mathrm{ext}}$.
The convex roof $\scr$ of $s$ is defined to be the biggest convex extension of $s$. It is given by  
$$
 \forall x\in K, \ \scr(x) = \inf_{(\lambda_{i},x_i)_{i\in \IntEnt{1}{n}}\in D_{x}} \sum_{i=1}^{n} \lambda_{i} s(x_{i}) 
 $$
where 
$$
D_{x}= \bigcup_{n\in\N^{*}}\left\{(\lambda_{i},x_i)_{i\in \IntEnt{1}{n}}\in([0,1]\times K_{\mathrm{ext}})^{n}\left| \sum_{i\in \IntEnt{1}{n}} \lambda_{i}x_i = x, \sum_{i\in \IntEnt{1}{n}}\lambda_i=1\right.\right\}.
$$
Note that, as a result of the Krein-Milman theorem, recalled below, $D_x$ is not empty, so that $\scr$ is well defined. 
\begin{Theorem}[Krein-Milman]
In a finite-dimensional vector space, a convex compact set is the convex hull of its extreme points: $K=\conv{\Kext}$.
\end{Theorem}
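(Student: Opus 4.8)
The plan is to prove this finite-dimensional form of the Krein--Milman theorem (essentially the Minkowski--Carathéodory theorem) by induction on $m:=\dim(\mathrm{aff}(K))$, the dimension of the affine hull of $K$. The inclusion $\conv{\Kext}\subseteq K$ is immediate since $K$ is convex and contains $\Kext$, so the whole content is the reverse inclusion $K\subseteq\conv{\Kext}$, where $\conv{\cdot}$ denotes the set of finite convex combinations. If $K=\emptyset$ there is nothing to prove; if $m=0$ then $K$ is a single point, which is its own unique extreme point, and the claim holds.

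For the inductive step, suppose the statement is known for all convex compact sets of affine dimension at most $m-1$, and let $K$ have affine dimension $m\geq 1$. Passing to $\mathrm{aff}(K)$, we may assume $K\subseteq\R^m$ with nonempty interior. Fix $x\in K$. If $x\in\partial K$, the supporting hyperplane theorem yields an affine hyperplane $H\ni x$ with $K$ contained in one of the closed half-spaces bounded by $H$. Then $F:=K\cap H$ is a nonempty convex compact set of affine dimension at most $m-1$ containing $x$, so by the induction hypothesis $x$ is a convex combination of extreme points of $F$. Moreover, since $H$ supports $K$, whenever a point of $F$ is written as a proper convex combination of two points of $K$, both of those points must lie on $H$, hence in $F$; so every extreme point of $F$ is an extreme point of $K$, and thus $x\in\conv{\Kext}$.

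It remains to handle $x$ in the interior of $K$. Choose any line $\ell$ through $x$; by compactness and convexity, $\ell\cap K$ is a segment $[y,z]$ with both endpoints on $\partial K$, and $x$ is a convex combination of $y$ and $z$. By the boundary case just treated, $y,z\in\conv{\Kext}$, hence $x\in\conv{\Kext}$. This exhausts all cases and closes the induction.

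I expect the only genuinely external input to be the supporting hyperplane theorem for full-dimensional convex bodies in $\R^m$; the remaining fact used — that the trace on $K$ of a supporting hyperplane passes its extreme points up to extreme points of $K$ — follows from a one-line computation with the linear functional defining $H$. The one piece of bookkeeping worth care is the reduction to $\mathrm{aff}(K)$, which is precisely what legitimizes assuming $K$ has nonempty interior, so that a supporting hyperplane exists at every point of $\partial K$.
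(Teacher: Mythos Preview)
Your argument is correct: it is the standard Minkowski inductive proof of the finite-dimensional Krein--Milman theorem, and all the steps go through. The reduction to $\mathrm{aff}(K)$ to ensure nonempty interior, the use of a supporting hyperplane at boundary points, the verification that extreme points of the face $F=K\cap H$ are extreme in $K$, and the handling of interior points via a chord are all fine.

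Note, however, that the paper does not actually prove this theorem: it is merely \emph{recalled} in Appendix~A as a classical fact, used to guarantee that the set $D_x$ of extreme-point decompositions is nonempty so that the convex roof is well defined. There is therefore no ``paper's own proof'' to compare your approach against; you have supplied a proof where the paper gives none.
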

Note  that $K_{\mathrm{ext}}$ may not be compact, even if $K$ is. On the other hand, the convex hull of a compact set is always compact, since we are in finite dimension.
We will need the following result, that can be found in \cite{uhlmann2010}.
\begin{Lemma}\label{lem:CR3}
Suppose $K_{\mathrm{ext}}$ is compact and that $s:K_{\mathrm{ext}}\to\R$ is continuous, then for each $x$ in $K$, there exists an optimal decomposition, meaning that there exists $(\lambda_{i},x_i)_{i\in \IntEnt{1}{n}}\in D_x$ such that 
\begin{equation}\label{eq:CR3}
\scr(x) = \sum_{i=1}^{n} \lambda_{i}s(x_i).
\end{equation}
\end{Lemma}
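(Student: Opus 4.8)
The plan is to turn the infimum defining $\scr(x)$ into an attained minimum by first bounding, uniformly in $x$, the number of extreme points needed in a decomposition, and then running a standard Weierstrass (compactness plus continuity) argument. Throughout I assume $K\neq\emptyset$, so that $\Kext\neq\emptyset$ by Krein-Milman and $D_x\neq\emptyset$ for every $x\in K$.

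The key step is the a priori bound. I would consider the lifted set $G=\{(y,s(y))\mid y\in\Kext\}\subseteq\R^{q+1}$, which is compact as the continuous image of the compact set $\Kext$. Given any $(\lambda_i,x_i)_{i\in\IntEnt{1}{n}}\in D_x$, the lifted barycenter $\big(x,\sum_i\lambda_i s(x_i)\big)=\sum_i\lambda_i\,(x_i,s(x_i))$ lies in $\conv{G}\subseteq\R^{q+1}$, hence by Carath\'eodory's theorem it equals $\sum_{k=1}^{q+2}\mu_k\,(y_k,s(y_k))$ for suitable $\mu_k\geq 0$, $\sum_k\mu_k=1$, $y_k\in\Kext$. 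Projecting onto the first $q$ coordinates and reading off the last one yields $(\mu_k,y_k)_{k\in\IntEnt{1}{q+2}}\in D_x$ with exactly the same value of the objective $\sum_k\lambda_k s(x_k)$. Consequently, in the definition of $\scr(x)$ one may restrict to decompositions with exactly $q+2$ terms, padding with zero weights attached to arbitrary points of $\Kext$ when fewer suffice.

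Next I would set $\Delta=\{\mu\in[0,1]^{q+2}\mid\sum_k\mu_k=1\}$ and
\[
D_x^{(q+2)}=\Big\{(\mu,y_\bullet)\in\Delta\times(\Kext)^{q+2}\ \Big|\ \textstyle\sum_{k}\mu_k y_k=x\Big\}.
\]
Since $\Kext$ is compact, $\Delta\times(\Kext)^{q+2}$ is compact, and $D_x^{(q+2)}$ is closed in it because the barycenter map $(\mu,y_\bullet)\mapsto\sum_k\mu_k y_k$ is continuous; hence $D_x^{(q+2)}$ is compact, and it is nonempty by the previous paragraph. The objective $\Phi(\mu,y_\bullet)=\sum_k\mu_k s(y_k)$ is continuous because $s$ is continuous on $\Kext$, so $\Phi$ attains its minimum on $D_x^{(q+2)}$ at some $(\lambda_k,x_k)_{k\in\IntEnt{1}{q+2}}$. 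This minimal value equals $\scr(x)$: it is $\geq\scr(x)$ since $D_x^{(q+2)}\subseteq D_x$, and $\leq\scr(x)$ since any decomposition in $D_x$ whose objective is within $\epsilon$ of $\scr(x)$ can, by the a priori bound, be replaced by one in $D_x^{(q+2)}$ with the same value, for every $\epsilon>0$. Therefore the minimizer just produced is an optimal decomposition, establishing Eq.~\eqref{eq:CR3}.

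The hard part is precisely the a priori bound on the number of terms: the space of all decompositions of $x$ with unbounded $n$ is not compact, so one cannot directly extract a convergent minimizing sequence. The epigraph-lifting trick --- passing to $\R^{q+1}$ and invoking Carath\'eodory there --- is what supplies a uniform bound and makes the compactness argument go through; everything else is routine. (A minor point to be careful about is that $s$ continuous on the compact set $\Kext$ is automatically bounded, so no separate boundedness hypothesis is needed for $\Phi$ to have a finite minimum.)
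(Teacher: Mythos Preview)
Your proof is correct. The graph-lifting trick combined with Carath\'eodory in $\R^{q+1}$ is exactly the right way to bound the number of terms uniformly, after which the Weierstrass argument on the compact set $D_x^{(q+2)}$ is routine. One small notational slip: in step~4 you write ``the same value of the objective $\sum_k\lambda_k s(x_k)$'' when you mean $\sum_k\mu_k s(y_k)=\sum_i\lambda_i s(x_i)$; the intent is clear.

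As for comparison: the paper does not actually prove this lemma. It simply states the result and refers the reader to Uhlmann's survey~\cite{uhlmann2010} (``We will need the following result, that can be found in~\cite{uhlmann2010}''). So your argument supplies a self-contained proof where the paper offers only a citation. The approach you give---lift to the graph of $s$, apply Carath\'eodory to get a uniform bound $q+2$ on the length of decompositions, then minimize a continuous function over a compact fibre---is the standard one and is essentially what one finds in Uhlmann's treatment as well.
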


We now formulate the problem we wish to address. It is clear from the above that, in general, 
$$
\scr(x)\geq \inf_{x\in K_{\mathrm{ext}}} s(x)
$$ 
for all $x\in K$. Indeed, let $(\lambda_{i},x_{i})_{i\in\IntEnt{1}{n}}$ be a convex decomposition of $x$. Then
$$
 \sum_{i=1}^{n} \lambda_{i}s(x_{i}) \geqslant  \sum_{i=1}^{n} \lambda_{i} \inf_{x\in K_{\mathrm{ext}}} s(x) \geqslant \inf_{x\in K_{\mathrm{ext}}} s(x).
 $$
Hence
\begin{equation}\label{eq:CR1}
\scr(x) = \inf_{(\lambda_{i},x_i)_{i\in \IntEnt{1}{n}}\in D_{x}} \sum_{i\in \IntEnt{1}{n}} \lambda_{i} s(x_{i}) \geqslant \inf_{x\in K_{\mathrm{ext}}} s(x) .
\end{equation}

In what follows, we suppose $s$ has a minimum $m$ on $K_{\mathrm{ext}}$, which is therefore also a minimum for $\scr$.  We write 
$$
\mcl{M}_s=s^{-1}(m)\subset K_{\mathrm{ext}}
$$ 
for the set where this minimum is reached. We wish to locate  the set 
$$
\mcl{M}_{\scr}=\left(\scr\right)^{-1}(m)\subset K
$$
where $\scr$ is minimal. 
\begin{Prop}\label{prop:CR1}
Let $\scr$ be the convex roof of $s : K_{ext}\mapsto \R$. Suppose $s$ has a minimum $m$  on $K_{\mathrm{ext}}$.
Then $m$ is the minimum of $\scr$ and
$$ 
\mathrm{conv}(\mcl{M}_{s}) \subset \mcl{M}_{\scr}.
$$
If in addition $\Kext$ is compact and $s$ continuous, then
$$
\mathrm{conv}(\mcl{M}_{s}) = \mcl{M}_{\scr}.
$$
\end{Prop}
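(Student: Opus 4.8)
The plan is to prove the two statements separately, since the first (the inclusion $\conv(\mcl{M}_s)\subset\mcl{M}_{\scr}$) holds in full generality while the second (equality under compactness/continuity) needs the existence of optimal decompositions from Lemma~\ref{lem:CR3}.

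For the inclusion, I would first note that $m$ is indeed the minimum of $\scr$: by Eq.~\eqref{eq:CR1}, $\scr(x)\geq \inf_{K_{\mathrm{ext}}} s = m$ for all $x\in K$, and $\scr=s$ on $K_{\mathrm{ext}}$ so the value $m$ is attained (on $\mcl{M}_s$). Now take $x\in\conv(\mcl{M}_s)$ and write $x=\sum_{i=1}^n\lambda_i x_i$ with $x_i\in\mcl{M}_s\subset K_{\mathrm{ext}}$, $\lambda_i\in[0,1]$, $\sum_i\lambda_i=1$. This is a legitimate element of $D_x$, so from the definition of the convex roof
\[
\scr(x)\leq \sum_{i=1}^n\lambda_i s(x_i) = \sum_{i=1}^n\lambda_i m = m.
\]
Combined with $\scr(x)\geq m$, this gives $\scr(x)=m$, i.e. $x\in\mcl{M}_{\scr}$. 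Hence $\conv(\mcl{M}_s)\subset\mcl{M}_{\scr}$.

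For the reverse inclusion under the extra hypotheses, I would take $x\in\mcl{M}_{\scr}$, so $\scr(x)=m$. Since $\Kext$ is compact and $s$ is continuous, Lemma~\ref{lem:CR3} provides an optimal decomposition $(\lambda_i,x_i)_{i\in\IntEnt{1}{n}}\in D_x$ with $\scr(x)=\sum_{i=1}^n\lambda_i s(x_i)$. Then
\[
m=\sum_{i=1}^n\lambda_i s(x_i)\geq \sum_{i=1}^n\lambda_i m = m,
\]
using $s(x_i)\geq m$ for each $i$ (as $x_i\in K_{\mathrm{ext}}$ and $m=\min_{K_{\mathrm{ext}}}s$). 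Equality forces $\sum_i\lambda_i(s(x_i)-m)=0$ with every term nonnegative, hence $s(x_i)=m$ whenever $\lambda_i>0$, i.e. $x_i\in\mcl{M}_s$ for all $i$ with $\lambda_i>0$. Discarding the $x_i$ with $\lambda_i=0$, we conclude $x=\sum_{i:\lambda_i>0}\lambda_i x_i\in\conv(\mcl{M}_s)$. Together with the first part, $\conv(\mcl{M}_s)=\mcl{M}_{\scr}$.

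The only subtle point — and the one place where the extra hypotheses are genuinely used — is the guaranteed existence of a finite optimal decomposition from Lemma~\ref{lem:CR3}; without it, one only knows $\scr(x)$ is an infimum and cannot immediately conclude that $x$ decomposes over $\mcl{M}_s$ alone. So the "main obstacle" is not really an obstacle here: the argument is a short squeeze once Lemma~\ref{lem:CR3} is invoked, and the bulk of the work (proving that lemma via a compactness/Carathéodory argument on the graph of $s$) is presumably carried out elsewhere in Appendix~\ref{s:convexroofs}.
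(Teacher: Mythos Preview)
Your proof is correct and follows essentially the same approach as the paper's: the squeeze $m\leq\scr(x)\leq m$ for the inclusion, and then Lemma~\ref{lem:CR3} plus the observation that $\sum_i\lambda_i(s(x_i)-m)=0$ with nonnegative summands forces $s(x_i)=m$ for the reverse inclusion. The paper phrases the latter step as a contradiction argument (assume some $s(x_1)>m$ and derive $\scr(x)>m$), but this is the same idea.
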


\begin{proof}
From Eq.~\eqref{eq:CR1} we know $\scr(x)\geq m$ for all $x\in K$ and since $\scr=s$ on  $K_{ext}$, it follows that the minimum of $\scr$ is $m$ and that $\mcl{M}_s\subset\mcl{M}_{\scr}$. Now, let $x\in\text{conv}(\mcl{M}_{s})$. Then $x=\sum_{i=1}^{n} \lambda_{i}x_i$ for some $(\lambda_{i},x_i)_{i\in\IntEnt{1}{n}}\in ([0,1]\times\mcl{M}_{s})^{n}$ . Then, 
\[
\scr(x) \leqslant \sum_{i=1}^{n} \lambda_{i}s(x_i) = \sum_{i=1}^{n} \lambda_{i} m =m.
\]
Together with \eqref{eq:CR1} this implies $\scr(x)=m$ and consequently $\conv{\mcl{M}_{s}}\subset \mcl{M}_{\scr}$.

We now turn to the second part of the proof. Suppose now that $x\in \mcl{M}_{\scr}$ so that $\scr(x)=m$. Lemma~\ref{lem:CR3} guarantees that one can write $\scr(x) = \sum_{j=1}^{n} \mu_{j} s(x_{j})$, with $x_j\in \Kext$. We wish to prove $s(x_j)=m$ for all $j$. Suppose on the contrary that $s(x_{1}) > m$; then 
\[
\scr(x) = \mu_{1} s(x_{1})+ \sum_{i=2}^{n} \mu_{j} s(x_{j}) > \mu_{1}m + \sum_{i=2}^{n} \mu_{j} s(x_{j}) \geqslant \sum_{i=1}^{n} \mu_{j} m =m,
\]
which is a contradiction.  We conclude that for all $j\in\IntEnt{1}{n}, \  s(x_{j})=m$ and thus $ \mcl{M}_{\scr}\subset  \text{conv}(\mcl{M}_{s})$ which proves the reverse inclusion.

\end{proof}
One may ask what happens if $s$ is not continuous. Does one have $ \overline{\mathrm{conv}}(\mcl{M}_{s}) \subset \mcl{M}_{\scr}$?  The answer is ``no''.  To see this, it is enough to look at an example where $K$ is the disc of radius $1$, centered at the origin of  the plane $\R^2$. Hence $K_{\textrm{ext}}$ is the unit circle. We take $s$ to be 
\begin{equation}\label{eq:sexample}
s:(x,y)\in K_{\textrm{ext}}  \rightarrow \chi_{[0,+\infty[}(y)\in\R :
\end{equation}
$s$ vanishes on the open lower half unit circle and equals $1$ on the closed upper half unit circle. In that case $\mathcal M_s$ is the open lower half unit circle and consequently $\conv{\mathcal M_s}$ is the open lower half disc. On the other hand, through a simple limiting procedure, one can show that $\mathcal M_{\scr}$ is the closed lower half disc with $(-1,0)$ and $(1,0)$ removed. In particular, $\mcl{M}_{\scr}$ is not closed and $\overline{\mathrm{conv}}(\mcl{M}_{s}) \not\subset \mcl{M}_{\scr}$. In Lemma~\ref{lem:CR2} below, we show that this is the generic situation in the presence of a gap between the minimal value of $s$ and the other values of $s$.

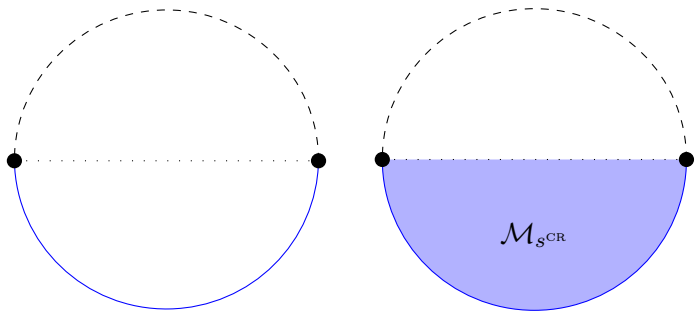
\begin{figure}[!h]
    \begin{center}
    \begin{tikzpicture}[scale=1]
        \draw[color=blue] (-2,0) arc (181:359:2cm);
        \draw[dashed] (2,0) arc (0:180:2cm);
        \draw[loosely dotted] (-2,0)--(2,0);
        \fill (-2,0) circle (0.1);
        \fill (2,0) circle (0.1);

    \end{tikzpicture}
    \hspace{1em}
    \begin{tikzpicture}[scale=1]
        \fill[color=blue!30!white] (-2,0) arc (180:360:2cm);
        \draw[dashed] (2,0) arc (0:180:2cm);
        \draw[loosely dotted] (-2,0)--(2,0);
        \node at (0,-1){$\mcl{M}_{\scr}$};
        \draw[color=blue] (-2,0) arc (180:360:2cm);
        \fill (-2,0) circle (0.1);
        \fill (2,0) circle (0.1);
    \end{tikzpicture}
    \end{center}
    \label{fig:convEx1}
    \caption{Representation of $s$ in Eq.~\eqref{eq:sexample} illustrating Lemma~\ref{lem:CR2}; $\mathcal M_s$ is the open lower half circle (full line) and $\mathcal M_{\scr}$ is the closed lower half disc with the points $(1,0)$ and $(-1,0)$ removed.  }
\end{figure}
\begin{Lemma}\label{lem:CR2}
Let $\scr$ be the convex roof of $s : K_{ext}\mapsto \R$. Suppose $s$ has a minimum $m$  on $K_{\mathrm{ext}}$ and that
$m^{+}=\inf \{s(x)-m \mid x\in K_{ext} \backslash \mcl{M}_{s}\} > 0$. Then 
\[ 
\conv{\mcl{M}_{s}}\subset \mcl{M}_{\scr} \subset \overline{\conv{\mcl{M}_{s}}}
\]
\end{Lemma}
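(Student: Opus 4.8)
The first inclusion $\conv{\mcl{M}_s}\subset\mcl{M}_{\scr}$ is already contained in Proposition~\ref{prop:CR1} (its proof only uses that $m$ is the minimum of $s$, not continuity), so I will simply invoke it. The work is entirely in the second inclusion $\mcl{M}_{\scr}\subset\overline{\conv{\mcl{M}_s}}$. The plan is to argue by contraposition: take $x\in K$ with $x\notin\overline{\conv{\mcl{M}_s}}$ and show $\scr(x)>m$, which forces $x\notin\mcl{M}_{\scr}$. Since $\overline{\conv{\mcl{M}_s}}$ is a closed convex set in finite dimension and $x$ lies outside it, the separating hyperplane theorem gives a linear functional $\ell$ and a constant $c$ with $\ell(y)\le c$ for all $y\in\overline{\conv{\mcl{M}_s}}$ and $\ell(x)=c+\delta$ for some $\delta>0$. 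I would normalize so that $\ell$ is $1$-Lipschitz on $\R^q$ and record the bound $R=\sup_{y\in K}|\ell(y)-\ell(x)|<\infty$ (finite since $K$ is compact).

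Next, fix an arbitrary convex decomposition $x=\sum_{i=1}^n\lambda_i x_i$ with $x_i\in K_{\mathrm{ext}}$, and estimate $\sum_i\lambda_i s(x_i)$ from below. The key dichotomy: each $x_i$ either lies in $\mcl{M}_s$, contributing $s(x_i)=m$, or lies in $K_{\mathrm{ext}}\setminus\mcl{M}_s$, contributing $s(x_i)\ge m+m^+$. Let $\Lambda=\sum_{i:\,x_i\notin\mcl{M}_s}\lambda_i$ be the total weight on the "bad" points. Then $\sum_i\lambda_i s(x_i)\ge m+\Lambda\, m^+$. It therefore suffices to bound $\Lambda$ away from $0$ uniformly over all decompositions. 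For this I apply $\ell$: since $\ell(x_i)\le c=\ell(x)-\delta$ whenever $x_i\in\mcl{M}_s$, while $\ell(x_i)\le\ell(x)+R$ always, taking $\ell$ of the decomposition gives
\[
\ell(x)=\sum_i\lambda_i\ell(x_i)\le (1-\Lambda)\bigl(\ell(x)-\delta\bigr)+\Lambda\bigl(\ell(x)+R\bigr)=\ell(x)-(1-\Lambda)\delta+\Lambda R,
\]
whence $\Lambda\ge \delta/(\delta+R)>0$, a bound independent of the decomposition. Combining, $\sum_i\lambda_i s(x_i)\ge m+\frac{\delta}{\delta+R}m^+$ for every decomposition, so $\scr(x)\ge m+\frac{\delta}{\delta+R}m^+>m$. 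Hence $x\notin\mcl{M}_{\scr}$, proving $\mcl{M}_{\scr}\subset\overline{\conv{\mcl{M}_s}}$.

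The main obstacle is conceptual rather than technical: one must resist the temptation to prove the stronger (false) statement with $\overline{\conv{\mcl{M}_s}}$ replaced by $\conv{\mcl{M}_s}$, and instead see that the separation argument naturally yields only the closure — a point can sit on the boundary of $\overline{\conv{\mcl{M}_s}}$ and still be approached by decompositions putting vanishing (but never zero) weight on bad extreme points, exactly as in the half-disc example preceding the lemma. The quantitative gap $m^+>0$ is what converts "weight on bad points cannot be made to vanish for $x$ strictly outside" into a genuine strict inequality $\scr(x)>m$; without it the bad contributions could be individually small and the argument would collapse. A minor point to handle cleanly is the normalization of $\ell$ and the finiteness of $R$, but both are immediate from compactness of $K$.
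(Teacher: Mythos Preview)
Your proof is correct. The route, however, differs from the paper's. The paper argues \emph{directly}: given $x\in\mcl{M}_{\scr}$, it picks near-optimal decompositions $x=\sum_i\lambda_i^\epsilon x_i^\epsilon$ with $\sum_i\lambda_i^\epsilon s(x_i^\epsilon)\le\epsilon$, uses the gap $m^+$ to bound the total weight on ``bad'' extreme points by $\epsilon/m^+$, and then renormalizes the ``good'' part to produce a point $x^\epsilon\in\conv{\mcl{M}_s}$ with $\|x-x^\epsilon\|\le \tfrac{2A}{m^+}\epsilon$, where $A$ bounds $K$. You instead argue by contraposition via a separating hyperplane, and then use the same dichotomy on extreme points to force a uniform lower bound on the bad weight $\Lambda$. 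Both arguments pivot on the gap $m^+>0$ at exactly the same place; the difference is purely in how one passes between ``$x$ far from $\conv{\mcl{M}_s}$'' and ``bad weight bounded away from zero''. The paper's version is constructive (you see the approximating sequence explicitly and get an explicit rate), while yours gives a clean quantitative lower bound $\scr(x)\ge m+\tfrac{\delta}{\delta+R}m^+$ in terms of the separation $\delta$, which is a pleasant byproduct. A small remark: your normalization of $\ell$ to be $1$-Lipschitz is harmless but unused; all you need is $R<\infty$, which follows from compactness of $K$ and linearity of $\ell$.
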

We note that if $\mathcal M_s$ is closed, then one finds, as in Proposition~\ref{prop:CR1}, that $\mcl{M}_{\scr} ={\conv{\mcl{M}_{s}}}$
\begin{proof}
As $s$ is bounded from below, we can suppose without loss of generality that $m= 0$ so that $s$ is nonnegative. We know from Proposition~\ref{prop:CR1} that $\conv{\mcl{M}_{s}}\subset \mcl{M}_{\scr}$. It remains to show the other inclusion. 
We want to prove that $\mcl{M}_{\scr} \subset \overline{\conv{\mcl{M}_{s}}}$. Thus, let x be in $\mcl{M}_{\scr}$ so that $\scr(x)=0$. By the definition of $\scr$, for all $ \epsilon > 0$, there exists $N_{\epsilon}\in\N^{*}$ such that

$$ \exists (\lambda_{i}^{\epsilon},x_i^{\epsilon})_{i\in \IntEnt{1}{N_{\epsilon}}}\in([0,1]\times K_{ext})^{N_{\epsilon}}, \ x=\sum_{i=1}^{N_{\epsilon}} \lambda_{i}^{\epsilon}x_i^{\epsilon}, \  0 = \scr(x) \leqslant \sum_{i=1}^{N_{\epsilon}} \lambda_{i}^{\epsilon}s(x_i^{\epsilon}) \leqslant \epsilon$$
and $\sum_{i=1}^{N_{\epsilon}}\lambda_{i}^{\epsilon} = 1$. Then,
\[ \epsilon \geqslant \sum_{i=1}^{N_{\epsilon}} \lambda_{i}^{\epsilon}s(x_i^{\epsilon}) \geqslant \sum_{\substack{i\in\IntEnt{1}{N_{\epsilon}} \\ x_{i}^{\epsilon}\notin \mcl{M}_{s}}} \lambda_{i}^{\epsilon}s(x_i^{\epsilon}) \geqslant m^{+} \sum_{\substack{i\in\IntEnt{1}{N_{\epsilon}} \\ x_{i}^{\epsilon}\notin \mcl{M}_{s}}} \lambda_{i}^{\epsilon}.
\]
And so,
 \begin{equation}\label{eq:CR2}
    \sum_{\substack{i\in\IntEnt{1}{N_{\epsilon}} \\ x_{i}^{\epsilon}\notin \mcl{M}_{s}}} \lambda_{i}^{\epsilon} \leqslant \frac{\epsilon}{m^{+}}.
\end{equation}
We define

\begin{equation}
    x^{\epsilon} = \frac{1}{P_{\epsilon}}\sum_{\substack{i\in\IntEnt{1}{N_{\epsilon}} \\ x_{i}^{\epsilon}\in \mcl{M}_{s}}} \lambda_{i}^{\epsilon}x_{i}^{\epsilon} \text{ where } P_{\epsilon} = \sum_{\substack{i\in\IntEnt{1}{N_{\epsilon}} \\ x_{i}^{\epsilon}\in \mcl{M}_{s}}} \lambda_{i}^{\epsilon}.
\end{equation}
Thus, for all $\epsilon >0$, $x_{\epsilon}\in\conv{\mcl{M}_{s}}$ and:
$$\left|\left|x - x^{\epsilon}\right|\right| \leqslant\sum_{\substack{i\in\IntEnt{1}{N_{\epsilon}} \\ x_{i}^{\epsilon}\in \mcl{M}_{s}}} \left|1-\frac{1}{P_{\epsilon}}\right|\lambda_{i}^{\epsilon}\norme{x_{i}^{\epsilon}}{} +  \sum_{\substack{i\in\IntEnt{1}{N_{\epsilon}} \\ x_{i}^{\epsilon}\notin \mcl{M}_{s}}} \lambda_{i}^{\epsilon}\norme{x_{i}^{\epsilon}}{} \leqslant \left|1-\frac{1}{P_{\epsilon}}\right|  A \sum_{\substack{i\in\IntEnt{1}{N_{\epsilon}} \\ x_{i}^{\epsilon}\in \mcl{M}_{s}}} \lambda_{i}^{\epsilon} +  A \sum_{\substack{i\in\IntEnt{1}{N_{\epsilon}} \\ x_{i}^{\epsilon}\notin \mcl{M}_{s}}} \lambda_{i}^{\epsilon}$$
where A is an upper bound of the norm of the elements of $K_{ext}$ which exists because $K$ is compact, hence  bounded. Finally, we obtain 
$$ \left|\left|x - x^{\epsilon}\right|\right| \leqslant \left| P_{\epsilon} -1\right| A + A \sum_{\substack{i\in\IntEnt{1}{N_{\epsilon}} \\ x_{i}^{\epsilon}\notin \mcl{M}_{s}}} \lambda_{i}^{\epsilon} \leqslant (1-P_{\epsilon}) A + A \sum_{\substack{i\in\IntEnt{1}{N_{\epsilon}} \\ x_{i}^{\epsilon}\notin \mcl{M}_{s}}} \lambda_{i}^{\epsilon} \leqslant 2 A \sum_{\substack{i\in\IntEnt{1}{N_{\epsilon}} \\ x_{i}^{\epsilon}\notin \mcl{M}_{s}}} \lambda_{i}^{\epsilon} \leqslant \frac{2A}{m^{+}}\epsilon$$
as $P_{\epsilon} +  \sum_{\substack{i\in\IntEnt{1}{N_{\epsilon}} \\ x_{i}^{\epsilon}\notin \mcl{M}_{s}}} \lambda_{i}^{\epsilon} =1$ by convex combination and thanks to the bound we had in \eqref{eq:CR2}. So $x^{\epsilon} \underset{\epsilon \rightarrow 0}{\rightarrow} x$ which shows that $x\in\overline{\conv{\mcl{M}_{s}}}$ and concludes the proof.
\end{proof}

\section{A counterexample to \eqref{eq:question2} in a spin-$1$ system}\label{App:spin1}
In~\cite{langrenez2023characterizing}, an example of a spin-$1$ system is given for which Eq.~\eqref{eq:sharpincl} holds:
\begin{equation*}
\Acal\cup\Bcal\subsetneq \EcalKDCpu\subsetneq \EcalKDCext.
\end{equation*}
For this spin-$1$ system, the minimum of $\nabcr$ equals $d+1=4$. We show here that, in that case, the implication in~\eqref{eq:question2}, which now reads
\begin{equation*}
\rho\in\EcalKDC\ \Rightarrow \ \{\nabcr(\rho)\leq d+1=\min \nabcr\} 
\end{equation*}
is violated. For that purpose, we  give an explicit example of a state $\rho$ that belongs to $\EcalKDC$ but for which $\nabcr(\rho)>d+1=4$.

First, we need to explicitly identify those pure states that have minimal support uncertainty $\nab$.

%%%%%%%%%%%%%%%
\subsection{Identification of the pure states in $\mcl{M}_{\nabcr}$}
In~\cite{langrenez2023characterizing}, we considered a spin-$1$ system taking $\Acal$ to be the $S_z=e_z\cdot \vec J$ basis and $\Bcal$ the $S_{z'}=e_{z'}\cdot \vec J$ basis, 
where  $e_{z'}=\frac13(2,2,-1)^T$.
%The $ \{ |b_m\rangle \}$ therefore forms, for $m=-1,0,1$,  the eigenbasis of the observable $e_{z'}\cdot J$ 
The transition matrix between these two bases is
\[
U = \frac{1}{3}\begin{pmatrix}
    -1 & 2 & 2 \\
    2 & -1 & 2 \\
    2 & 2 & -1
\end{pmatrix}.
\]
Since this system is completely incompatible, it is known~\cite{debievre2021,debievre2023a} that the pure  minimal uncertainty states  $|\psi\rangle$  satisfy $\nab(\psi)=d+1=4$. Using this information, explicit computations permit to identify these states to be the basis states $\Acal \cup \Bcal$ and, in addition, 9 other states:
\[
\cket{\phi_{1}} = \frac{\cket{a_2}-\cket{a_3}}{\sqrt{2}}, \ \cket{\phi_{2}} = \frac{\cket{a_1}-\cket{a_3}}{\sqrt{2}}, \cket{\phi_{3}} = \frac{\cket{a_1}-\cket{a_2}}{\sqrt{2}},
\]
\[
\cket{\psi_{1}} = \frac{\cket{a_1}+2\cket{a_2}}{\sqrt{5}},  \ \cket{\psi_{2}} = \frac{2\cket{a_1}+\cket{a_2}}{\sqrt{5}}, \ 
\cket{\psi_{3}} = \frac{\cket{a_1}+2\cket{a_3}}{\sqrt{5}},  
\]
\[
\cket{\psi_{4}} = \frac{2\cket{a_1}+\cket{a_3}}{\sqrt{5}}, \
\cket{\psi_{5}} = \frac{\cket{a_2}+2\cket{a_3}}{\sqrt{5}} \ \text{ and } \cket{\psi_{6}} = \frac{2\cket{a_2}+\cket{a_3}}{\sqrt{5}}.
\]
These $15$ pure states form the set $\mcl{M}_{\nab}$ and, according to Lemma~\ref{lem:CR2}, they are the extreme points of $\mcl{M}_{\nabcr}$. It was further shown in~\cite{langrenez2023characterizing} that all pure KD-positive states are given by  $\cket{\phi_1}$,$\cket{\phi_2}$, $\cket{\phi_3}$ and the basis states $\Acal\cup\Bcal$. In particular, it follows that  not all minimal support uncertainty pure states are KD-positive. 

We will prove in the next subsection that there exist mixed KD-positive states that do not have minimal uncertainty

\subsection{Identifying a KD-positive state not having minimal uncertainty.}

As $\Minnabcr=\conv{\mcl{M}_ {\nab}}$ and as there is a finite number of elements in $\mcl{M}_ {\nab}$, we can numerically exhaust all possible hyperplanes containing states in $\mcl{M}_ {\nab}$ and determine which ones are bounding hyperplanes of $\Minnabcr$. We find that there are $28$ bounding hyperplanes for $\Minnabcr$. However, we will study one particular bounding plane that is determined as follows. Consider the self-adjoint operator $F_{\star}$:
\begin{equation}\label{eq:Fstar}
    F_{\star} = \frac{1}{12}\begin{pmatrix}
        2 & -2 & -3 \\
        -2 & 6 & -1 \\
        -3 & -1 & 4
    \end{pmatrix}.
\end{equation}
Note that $F_{\star}$ has trace $1$.  Moreover, this operator determines a bounding plan for the convex set $\Minnabcr$ since any pure state $\cket{\psi}$ with minimal support uncertainty  satisfies the following inequality:
\begin{equation}\label{ex:spin1_traceF}
 \bra{\psi}F_{\star}\cket{\psi}=\mathrm{Tr}(F_{\star}\cket{\psi}\bra{\psi}) \leqslant \frac{1}{2}. 
\end{equation}
Thus, for any state $\rho$ in $\Minnabcr = \conv{\mcl{M}_{\nab}}$, we have that 
\begin{equation}\label{ex:spin1_traceFmixed}
 \mathrm{Tr}(F_{\star}\rho) \leqslant \frac{1}{2}. 
\end{equation}
This inequality is saturated when $\cket{\psi}$ is equal to one of $\cket{a_2},\cket{b_1},\cket{\phi_1},\cket{\phi_2}$ or $\cket{\phi_3}$. 
As a result, the affine hyperplane $\left\{\Tr(\rho F_\star)=\frac{1}{2}\right\}$ is a bounding hyperplane for $\Minnabcr$.
Note that
\begin{equation}\label{eq:TraceFsqu}
\mathrm{Tr}(F_{\star}^2) = \frac{7}{12} > \frac{1}{2}.
\end{equation}
One easily establishes, through an explict computation, that $Q(F_\star)$ only has real nonnegative entries, so that $F_\star$ is KD positive.

Now, we consider the following family of operators, for $\lambda\in[0,1]$:
\begin{equation}\label{eq:conv}
    \rho_{\lambda} = \lambda F_{\star} + \frac{(1-\lambda)}{3}(\cket{a_2}\bra{a_2} + \cket{b_1}\bra{b_1} + \cket{\phi_3}\bra{\phi_3}).
\end{equation}
For any $\lambda\in[0,1],$ $\rho_{\lambda}$ is a KD-positive operator as a convex combination of KD-positive operators and it has trace $1$. For $\rho_{\lambda}$ to be a state, we need to check if it is a positive operator.
We can compute the eigenvalues of $\rho_{\lambda}$:
\begin{align}
    \nonumber r_{1}(\lambda) &=\frac{1}{18}\left( 7\lambda +2 \right) ,\\
    \nonumber r_{2}(\lambda) &= \frac{1}{36}\left(\sqrt{7\lambda^2 - 32\lambda +160} -7\lambda +16\right) ,\ \\
    r_{3}(\lambda) &=\frac{1}{36}\left(-\sqrt{7\lambda^2 - 32\lambda +160} -7\lambda +16\right). 
\end{align}
These eigenvalues are all real and nonnegative if and only if   
\[
 -7\lambda +16 \geqslant \sqrt{7\lambda^2 - 32\lambda +160}.
\]
This inequality is satisfied for $\lambda\in[0,\frac{4}{7}]$  implying that for these values $\rho_{\lambda}$ is a KD-positive state. Moreover, we have that 
\[
\mathrm{Tr}(F_{\star}\rho_{\lambda}) = \frac{1}{2} + \frac{\lambda}{12}.
\]
Consequently, for $\lambda\in]0,\frac{4}{7}]$, $\rho_{\lambda}$ is a KD-positive state satisfying $\mathrm{Tr}(F_{\star}\rho_{\lambda}) > \frac{1}{2}$. Thus, for this range of $\lambda$ values, $\rho_{\lambda}$ does not belong to $\Minnabcr$, meaning that $\rho_{\lambda}$ satisfies 
\[
\nabcr(\rho_{\lambda}) > 4 =d+1.
\]
Thus, we have an example in dimension $3$ for which there exist mixed KD-positive states that do not have minimal support uncertainty, meaning that in this case, \eqref{eq:question2} does not hold.

\newpage

\bibliographystyle{ieeetr}
\bibliography{BIB_KD}

\end{document}